\newtheorem{theorem}{Theorem}
\newtheorem{proposition}{Proposition}
\newtheorem{lemma}{Lemma}
\newtheorem{ass}{Assumption}
\newtheorem{definition}{Definition}
\newtheorem{remark}{Remark}
\newcommand\EE {\mathbb E}
\newcommand\FF {\mathbb F}
\newcommand\NN {\mathbb N}
\newcommand\RR {\mathbb R}
\newcommand\PP {\mathbb P}
\newcommand\QQ {\mathbb Q}
\newcommand\WW {\mathbb W}
\def\qed{\hskip6pt\vrule height6pt width5pt depth1pt}
\title{Optimal contract design via relaxation: application to the problem of brokerage fee for a client with private signal.}
\author{G. A. Alvarez and S. Nadtochiy\footnote{Partial support by the NSF grant DMS-2205751 is acknowledged by both authors.}}
\begin{document}
\maketitle

\begin{abstract}
In this paper we show how the relaxation techniques can be used to establish the existence of an optimal contract in presence of information asymmetry. The method we illustrate was initially motivated by the problem of designing optimal brokerage fees, but it does apply to other optimal contract problems, in which (i) the agent controls linearly the drift of a diffusion process, (ii) the direct dependence of the principal's and the agent's objectives on the strategy of the agent is of a special form, and (iii) the space of admissible contracts is compact. This method is then applied to establish existence of an optimal brokerage fee in a market model with a private trading signal observed by the broker's client but not by the broker.
\end{abstract}

\section{Introduction}

The problem of optimal contract design (a.k.a. the principal-agent problem) involves two parties, a principal and an agent, who enter into an agreement, a.k.a. the contract, which is designed by the principal. According to this agreement, the principal promises the agent a payment contingent on the agent's action and on the realized values of the observed (random) states. The agent determines the optimal action by maximizing his objective, which depends on his action, on the observed state processes, and on the contract. The principal chooses an optimal contract as to maximize her objective, which depends on the agent's optimal action (given the contract), as well as on the states and on the contract itself. If the agent's optimization is performed in continuous time, and if the observations and states are given by stochastic processes, the optimal contract problem results in a pair of coupled stochastic control problems. The formal framework for such problems is developed, for example, in \cite{HolmstromMilgrom}, \cite{CvitanicZhang}, \cite{CadenillasCvitanic}, \cite{Sannikov}, \cite{TouziPossamaiCvitanic}. Note, however, that the latter papers do not directly address the question of existence of an optimal contract in a general framework, focusing instead on the solutions to specific problems and on reducing the optimal contract design to more conventional stochastic control problems. However, in many cases of interest (discussed below), such a reduction does not immediately yield the existence of an optimal contract.  

On contrary, \cite{Kadan} addresses the question of existence of an optimal contract. Most of the latter paper is devoted to the relaxed formulation of the optimal contract problem, in which the contract is allowed to depend on the additional (artificially generated) random state. Then, under the assumptions of compactness of the agent's actions and of the continuity of objectives, \cite[Theorem 5.11]{Kadan} shows the existence of an optimal contract in the relaxed formulation. If, in addition, the space of admissible contracts is compact, \cite[Proposition 12.5]{Kadan} shows the existence of an optimal contract in the strong formulation. In the present paper, the compactness of the set of admissible contracts is assumed a priori, but the set of admissible actions of the agent, in the strong formulation of the problem, is not compact (as is the case in most relevant stochastic control problems). To address the latter challenge, we consider a relaxation of the control problem of the agent and show (under appropriate structural assumptions on the controlled state and on the objectives) that the set of admissible actions in the relaxed problem is compact, while the objectives remain continuous. This yields existence of an optimal contract in the relaxed problem (Theorem \ref{existence:weak:opt}). Then, we show (under appropriate convexity assumptions) that an optimal contract in the relaxed problem is also optimal in the strong formulation (Theorem \ref{cor:main}). 

\smallskip

It is worth explaining why we resort to the abstract existence results, as opposed to using the powerful methods developed recently, for example, in \cite{Sannikov}, \cite{TouziPossamaiCvitanic}. The latter approach allows one to reduce an optimal contract problem to the more standard problem of a controlled diffusion. However, the resulting control problem involves the control of diffusion coefficient (as opposed to controlling the drift only), for which there are no general existence results. The situation becomes even more complicated if one incorporates additional information asymmetry into the model. Namely, in some models, it is natural to assume that certain coordinates of the state process are observed by the agent but not by the principal. This reflect the fact that the agent has private information, and a specific example of such a setting is described in the subsequent paragraph. The presence of private signal of the agent introduces the new form of information asymmetry, which is different from the standard second- and third-best settings.\footnote{This information asymmetry may seem similar to the one appearing in the third-best setting. Indeed, one may view the private signal of the agent as the agent's ``type". However, unlike the actual agent's type in the third-best setting, the private signal is only observed by the agent \emph{after} the contract is determined, which makes it impossible to construct the ``menu of contracts" typically used in the third-best case.} Applying the methods of \cite{Sannikov}, \cite{TouziPossamaiCvitanic} to a problem with such information asymmetry, one reduces the optimal contract problem to a problem of controlled diffusion but with the additional informational constraint: at the terminal time, certain coordinates of the controlled state process must be measurable w.r.t. a smaller sigma algebra (than the one given by the terminal value of the filtration to which the controls are to be adapted). To the best of our knowledge, to date, there exist no tractable characterizations of solutions to such problems, and it is not clear how to establish their existence.\footnote{See \cite{Williams} for the solution in a concrete example.} We leave the latter challenges for future research.

The financial problem that motivated our analysis is the optimal design of brokerage fees. Consider an agent who trades a financial asset via a broker: the agent decides on the trading rate, while the broker (i.e. the principal) decides what fee the agent has to pay for trading. The latter fee is allowed to depend on the state processes observed by the broker (e.g., on the price process). However, the agent may have a private trading signal (e.g., obtained via the use of proprietary predictive factors) which is not observed by the broker (as the later is not in the business of designing such predictive factors). A simplified version of this model, with symmetric observations, was considered recently in \cite{AlonsoWebsterNadtochiy}. However, the presence of non-standard information asymmetry makes the problem much more challenging. Intuitively, the broker wants to incentivize the agent to use his private signal in order to make him more profitable and, therefore, more agreeable to pay higher trading fees. However, as the trading signal is not observed by the broker, the exact structure of an optimal trading fee is not clear. In this paper, we use the aforementioned relaxation method to prove the existence of an optimal fee, leaving its characterization for future research.

\smallskip

The remainder of the paper is organized as follows. In Section \ref{strong:opt:contract} we introduce the strong formulation of the optimal contract problem where (i) the agent controls linearly the drift of a diffusion process, (ii) the direct dependence of the principal's and the agent's objectives on the strategy of the agent is of a special form, and (iii) the space of admissible contracts is compact.
In Section \ref{relaxed:optimal:contract}, we introduce a relaxed version of the optimal contract problem and prove the existence of an optimal contract in such a relaxed problem by showing that any epsilon-optimal sequence of contracts has a limit point that is optimal.
In Section \ref{solution_strong_problem} we establish a precise connection between the relaxed and the strong formulations of the optimal contract problem, and we show that, under an appropriate concavity assumption, every optimal contract in the relaxed problem is a solution to the strong problem.
Finally, in Section \ref{opt:brok:as:info}, we apply our results to the problem of designing optimal brokerage fee for a client with a private trading signal, where we prove that any epsilon-optimal sequence of contracts has a limit point that is optimal.

\section{Strong formulation of the optimal contract problem}
\label{strong:opt:contract}

We start by introducing the optimal contract problem we study in this paper. We fix $T < \infty$, $k \geq 1$ and consider $\Omega = C([0,T],\RR^k)$, the canonical space equipped with the Wiener measure $\WW$.\footnote{One may choose a different reference measure, instead of $\WW$. For example, the results that follow also hold if $\WW$ is replaced by a scaled Wiener masure: i.e., the distribution of a standard Brownian motion whose components are multiplied by positive constants.} We denote by $X$ the canonical random element in $\Omega$ and by $\FF$ its completed (with respect to $\WW$) natural filtration.
Next, we introduce the controls of the agent as follows. Let $l\geq 1$, we consider the progressively measurable mappings $A$, $b$: 
\begin{equation*}
    \begin{split}
        A&:  \Omega \mapsto C([0,T],\RR^{k\times l}), \\
        b&:  \Omega \mapsto C([0,T],\RR^{l}),
    \end{split}
\end{equation*}
that satisfy
\begin{enumerate}
    \item $A_t(\cdot), b_t(\cdot)$ are continuous with respect to $\|.\|_{C}$, for any $0 \leq t\leq T$.
    \item $\sup_{t\in [0,T]}\sup_{x\in \Omega}|b_{t\wedge\tau_n}(x)| + \sup_{t\in [0,T]}\sup_{x\in \Omega}|A_{t\wedge \tau_n}(x)| < \infty,$ 
\end{enumerate}
where $\|.\|_{C}$ denotes the supremum norm on $\Omega$ and 
\begin{align}
\tau_n :=\inf\left\{t\geq 0: |X_t|\geq n\right\},\quad n\geq 1.\label{eq.taun.def}
\end{align}

\begin{definition}\label{def:A}
Agent's admissible control is any $\RR^{k}$-valued locally square integrable process $\nu$, constructed on $(\Omega,\mathbb{F},\mathbb{W})$, such that:
\begin{itemize}
\item[a)] $\bar M_T(\nu) := \exp\left( -\frac{1}{2}\int_0^T\|\nu_s\|^2\,ds + \int_0^T \nu^\top_s\,dX_s\right)$ 
satisfies $$\EE^{\mathbb{W}}\left[\log(\bar{M}_T(\nu))\bar M_T(\nu)\right]< \infty,$$
implying $\EE^{\WW}\bar{M}_T(\nu)= 1$ (via Jensen's inequality and Vall\'ee-Poussin theorem),
\item[b)] $\mathbb{W}$-a.s., the inequality $b_t + A_t\,\nu_t \leq 0$ holds for all $t\in[0,T]$.
\end{itemize}
We denote the set of all admissible controls of the agent by $\mathcal{A}$.
\end{definition}

Girsanov's theorem (\cite[Theorem 5.1]{KaratzasShreve}) implies that the state process corresponding to the agent's control $\nu$ satisfies the following diffusion equation: 
\begin{equation}\label{eqn.state}
    dX_t = \nu_tdt+dB^\nu_t,
    \quad X_0 = x_0,
\end{equation}
where $B^\nu$ is a $k$-dimensional standard Brownian motion under a probability measure $\bar\QQ^\nu$ on $(\Omega,\mathcal{F}_T)$, defined via
\begin{align*}
& d\bar\QQ^\nu/d\mathbb{W} = \bar{M}_T(\nu).
\end{align*} 


The above construction explains the first condition in Definition \ref{def:A}: the agent controls the drift of the state process by choosing an associated measure with the Radon-Nikodym derivative $\bar M(\nu)$.
To explain the second condition of Definition \ref{def:A}, assume that $\nu$ is an affine function of a progressively measurable process $\pi$, so that $\nu_t= \tilde b_t + \tilde A_t \pi_t$, with progressively measurable $(\tilde b,\tilde A)$ and with $\pi$ that can be chosen freely (subject to integrability conditions) by the agent. Then, the range of resulting processes $\nu$ will be given by a linear space, as in the second condition of Definition \ref{def:A}. Thus, the latter definition states that the agents controls linearly the drift of the state process (this is shown in more detail in Section \ref{opt:brok:as:info}).

\begin{remark}
The pair $(\nu,\bar\QQ^\nu)$ is known as weak control and it is commonly used in the literature on optimal contract with moral hazard (see, e.g., \cite{Sannikov}, \cite{TouziPossamaiCvitanic}). We emphasize its difference with the agent's relaxed controls introduced later in this paper.
\end{remark}

Next, we introduce the set of admissible contracts $\xi$ that the principal can proposes to the agent.
\begin{definition}\label{def:P}
The set of admissible contracts $\mathcal{C}$ is a collection of continuous mappings $\xi$ from ${\Omega}$, equipped with the uniform norm, to $\RR$, such that: $\mathcal{C}$ is equicontinuous on any compact set, pointwise bounded, and closed with respect to the topology of uniform convergence on compact sets.
\end{definition}

\begin{remark}
In the above definition, the mappings $\xi$ can be restricted to depend only on certain components of the canonical process $X$. This feature allows us to include the information asymmetry discussed in the introduction. We illustrate this in more detail in Section \ref{opt:brok:as:info}.  
\end{remark}

\smallskip

For any admissible contract $\xi\in\mathcal{C}$, the agent's value and objective are given by:  
\begin{align}
&\bar{V}_a(\xi):=\sup_{\nu\in\mathcal{A}} 
\bar{J}_a\left(\nu,\xi\right), \label{eq.agentProblem}\\
&\bar{J}_a(\nu,\xi):=\EE^{\WW} U_a\left(\xi(X),X,\bar M_T(\nu)\right),\nonumber
\end{align}
where $U_a:\RR\times \Omega \times \RR \mapsto \RR$ is the utility of the agent (which is a Borel-mesurable function).

\smallskip

It is worth commenting on why and how the agent's utility depends on $\bar M_T(\nu)$. In case the agent is only affected by the state $X$ and by the contract's payment $\xi(X)$, via a function $\bar U_a(\xi,X)$, it is natural to choose his objective as 
\begin{align*}
&\bar{J}_a(\nu,\xi)=\EE^{\bar\QQ^\nu} \bar U_a(\xi(X),X) = \EE^{\WW} \left[U_a(\xi(X),X) \bar M_T(\nu)\right],
\end{align*}
where we recover $U_a(\xi,x,m)=\bar U_a(\xi,x)\,m$. However, the main reason to introduce the dependence on $\bar M_T(\nu)$ in the utility function $U_a$ is to include the cases where agent is affected by $\nu$ directly. Assume, for example, that the agent is affected by $X$ and $\xi(X)$, but is also a subject to quadratic penalty for large $\nu$. Then, it is natural to define
\begin{align*}
& \bar U_a(\xi,X,\nu) := \tilde U(\xi,X) - \phi\int_0^T \|\nu_s\|^2 ds,
\quad \bar{J}_a(\nu,\xi):=\EE^{\bar\QQ^\nu} \bar U_a(\xi(X),X,\nu),
\end{align*}
with some auxiliary function $U_1$.
Assuming sufficient integrability of $\nu$ (which, e.g., can be enforced via the choice of $A,b$), we obtain
\begin{align*}
&\bar{J}_a(\nu,\xi) = \EE^{\WW} \left[\tilde U(\xi(X),X) \bar M_T(\nu)\right] - 2\phi \,\EE^{\bar\QQ^\nu}\log \bar M_T(\nu)\\
&= \EE^{\WW} \left[\tilde U(\xi(X),X) \bar M_T(\nu) - 2\phi \,\bar M_T(\nu)\log \bar M_T(\nu)\right]
=: \EE^{\WW} U_a\left(\xi(X),X,\bar M_T(\nu)\right).
\end{align*}
Another example arises if the penalty for large $\nu$ is multiplicative:
\begin{align*}
& \bar U_a(\xi,X,\nu) := \tilde U(\xi,X)\,e^{-\phi\int_0^T \|\nu_s\|^2 ds},
\quad \bar{J}_a(\nu,\xi):=\EE^{\bar\QQ^\nu} \bar U_a(\xi(X),X,\nu),\\
&\bar{J}_a(\nu,\xi) = \EE^{\WW} \left[\tilde U(\xi(X),X)\,\exp\left(-\frac{1+2\phi}{2}\int_0^T \|\nu_s\|^2 ds + \int_0^T \nu^\top_s\,dX_s\right)\right]\\
& = \EE^{\WW} \left[\tilde U(\xi(X),X)\,\bar{M}^q_T(\nu/q) \right] =: \EE^{\WW} U_a\left(\xi(X),X,\bar M_T(\nu/q)\right),
\end{align*}
where $q:=1/(1+2\phi)$. Of course, there may exist other cases where the dependence on $\nu$ can be represented as the dependence on $\bar M_T(\nu)$.

\smallskip

Next, we introduce the principal's problem. First, we define the set of agent's optimal responses for a given $\xi \in \mathcal{C}$,
\begin{equation*}
    \mathcal{A}^*(\xi) :=\{\nu \in \mathcal{A} : \bar{J}_a\left(\nu,\xi\right) = \bar{V}_a(\xi)\in\RR\},
\end{equation*}
and the restricted set of admissible contracts
\begin{equation*}
     \bar{\mathcal{C}}^{a} := \{\xi\in\mathcal{C}:\, \bar{V}_a(\xi)\geq R_a,\, \mathcal{A}^*(\xi) \neq \emptyset\},
\end{equation*}
which represents the contracts that meet the agent's reservation value $R_a$.
Finally, the principal's value and objective are given by:  
\begin{align}
\bar{V}_p&:=\sup_{\xi\in\bar{\mathcal{C}}^{a}} \bar{J}_p(\xi),\label{eq.principalProblem.def}\\
\bar{J}_p(\xi)&:=\sup_{\nu\in\mathcal{A}^*(\xi)}\EE^{\WW} U_p\left(\xi(X),X,\bar M_T(\nu)\right),\nonumber   
\end{align}
where $U_p:\RR\times \Omega \times \RR \mapsto \RR$ is the utility of the principal (which is a Borel-mesurable function). The reason why $U_p$ depends on $\bar M_T(\nu)$ is the same as the one given above for the agent's utility.

\smallskip

We assume that $\mathcal{A}$ and the utility functions $U_{a/p}$ are chosen so that the associated expectations are well defined for all $\nu\in\mathcal{A}$ and all $\xi\in\mathcal{C}$. This is ensured by the assumptions that follow.




\section{Relaxation of the problem}
\label{relaxed:optimal:contract}

In this section we relax the agent's strong control $\nu$ in a convenient way. The goal of this relaxation is to make the set of (relaxed) controls pre-compact (note that $\mathcal{A}$ is not pre-compact in any conventional topology), while preserving the continuity of the objectives w.r.t. controls.
Notice that any admissible control $\nu$ of the agent can be identified with a probability measure $\bar\QQ^\nu$ on $\Omega$. In some cases, the latter observation may yield the desired relaxation, in which we replace $\nu$ by $\bar\QQ^\nu$. More precisely, such a relaxation works well if the utilities $U_{a/p}\left(\xi,X,\bar M_T(\nu)\right)$ are linear in $\bar M_T(\nu)$, in which case we can represent the objectives as expectations of functions of $(\xi(X),X)$ under $\bar\QQ^\nu$, thus, removing $\bar M_T(\nu)$ and any other source of dependence on $\nu$. However, if the dependence on $\bar M_T(\nu)$ is of a different form (e.g., if the agent or the principal is subjected to quadratic penalty for large $\nu$, as discussed in the previous section), such a trivial relaxation does not remove $\bar M_T(\nu)$, and hence the direct dependence on $\nu$, from the objectives. Thus, we define a relaxed control of the agent as a probability measure on the extended state space $ \Omega \times \RR$, which represents the pairs $(X,\bar M_T)$, but with the second component no longer being a function of the first one.

\smallskip

On the extended state space as $\Omega \times \RR = {C}([0,T],\RR^{k})\times \RR$ we consider the $\RR^l$-valued stochastic process $Z$ (initially constructed on $(\Omega,\mathbb{F})$) defined by
\begin{equation*}
Z_t := \int_0^t(b_s\,ds + A_s\,dX_s),
\end{equation*}
where the processes $A, b$ are defined in the previous section. Notice that the process $Z$ is well defined due to the properties of $A$ and $b$. 
Next, we introduce the agent's relaxed controls as follows.

\begin{definition}\label{def:U}
For any given $A$, $b$ and $Z$, as above, we define $\mathcal{U}$ as the set of all probability measures $\QQ$ defined on $(\Omega\times\RR,\mathcal{F}_T\otimes\mathcal{B}(\RR))$, such that
\begin{itemize}
\item[1.] $\EE^\QQ M = 1$, 
\item[2.] $\QQ(M > 0) = 1$,
\item[3.] $\QQ(dx,\RR)=\mathbb{W}(dx)$,
\item[4.] $\EE^\QQ\left[M\log(M) \right] < \infty$,
\item[5.] $\EE^\QQ\left[M\eta^{\top} \left(Z_{t\wedge \tau_N}-Z_{s\wedge\tau_N} \right)\right] \leq 0$,
for all $0\leq s\leq t\leq T$, $N\geq1$, and all measurable $\RR^l_+$-valued bounded functions $\eta$ of the paths of $X$, such that $\eta(x) = \eta\left(x_{r},r \leq s\wedge\tau_N(x)\right)$ and such that $\eta$ is continuous at $\mathbb{W}$-a.e. path of $X$. In the above, $M$ denotes the last coordinate of the canonical element on $\Omega \times \RR$, and the stopping time $\tau_N$ is defined in \eqref{eq.taun.def}. 
\end{itemize}
\end{definition}

\smallskip

A strong control $\nu$ of the agent corresponds to the relaxed control
\begin{align}
\hat\QQ^\nu(dx,dm):=\mathbb{W}(dx)\,\delta_{\bar M(\nu)}(dm).\label{eq.hatQ.def}
\end{align}
Recall $\bar M(\nu)$ is a function of $X$ that is a Radon-Nikodym derivative of $\QQ^\nu$ w.r.t. $\WW$. The first three conditions of Definition \ref{def:U} reflect this observation. The last condition Definition \ref{def:U} represents the property (b) of Definition \ref{def:A}: this connection is made precise in the proofs of Lemmas \ref{le:strong.sol.le1} and \ref{lemma:strong:is:weak}.

\smallskip

We consider the following relaxation of the agent's control problem:
\begin{align}
&V_a(\xi):=\sup_{\QQ\in\mathcal{U}} J_a(\QQ,\xi),\label{eq.agentProblem.def.3}\\
&J_a(\QQ,\xi):=\EE^\QQ U_a\left(\xi(X),X,M\right),\nonumber
\end{align}
where $U_a$ is introduced in the previous section.

Finally, we introduce
\begin{align*}
&\mathcal{U}^*(\xi) :=\{\nu \in \mathcal{U} : J_a\left(\nu,\xi\right) = V_a(\xi)\in\RR\},\\
&\mathcal{C}^{a} := \{\xi\in\mathcal{C}:\, V_a(\xi)\geq R_a,\, \mathcal{U}^*(\xi) \neq \emptyset\},
\end{align*}
and define the relaxed problem of the principal:
\begin{align}
&V_p:=\sup_{\xi\in\mathcal{C}^{a}} J_p(\xi)\label{eq.principalProblem.def.relaxed}\\
&J_p(\xi):=\sup_{\QQ\in\mathcal{U}^*(\xi)}\EE^\QQ U_p\left(\xi(X),X,M\right),\nonumber
\end{align}
where $ \bar{U}_p$ is defined in the previous section.

\smallskip

The assumptions that follow ensure that the associated expectations are well defined for all $\QQ\in\mathcal{U}$ and all $\xi\in\mathcal{C}$.


\subsection{Solution to the relaxed problem}
In order to show the existence of an optimal contract in the relaxed formulation we need to make some technical assumptions. The following assumption is quite standard.
\begin{ass}\label{ass:1}
We assume that the utility functions satisfy the following:
\begin{itemize}
\item[a)] Continuity: $U_a$ and $U_p$ are continuous on $\RR\times C([0,T],\RR^k)\times(0,\infty)$.
\item[b)] Feasibility: for any $\xi\in\mathcal{C}$ and $\QQ\in\mathcal{U}$ the expectations $\EE^\QQ U_{a/p}\left(\xi(X),X,M\right)$ are well defined (even if infinite), and there exist $\xi\in\mathcal{C}$ and $\QQ\in\mathcal{U}^*(\xi)$ such that $\EE^\QQ\left[U_a(\xi(X),X,M)\right]\geq R_a$.
\end{itemize}
\end{ass}

\smallskip

The next assumption ensures the appropriate uniform integrability. To explain the need for this assumption, we note that, although the set $\mathcal{U}$ is pre-compact in the weak topology, it is not closed. Indeed, the conditions 1, 2 and 4 of Definition \ref{def:U} are not preserved by the weak limits (remarkably, the last condition is stable under the weak limits, as shown in the proof of Theorem \ref{existence:weak:opt}). The following assumption guarantees the existence of a subset $\mathcal{U}_0\subset\mathcal{U}$ which contains all optimal responses of the agent and whose closure w.r.t. the weak topology is contained in $\mathcal{U}$.

\begin{ass}\label{ass:1.5}
We assume that there exists a set $\mathcal{U}_0\subset\mathcal{U}$, such that, for any $\xi\in\mathcal{C}$,
\begin{align*}
& \sup_{\QQ\in\mathcal{U}} \EE^{\QQ}\left[ U_a(\xi(X),X,M) \right]= \sup_{\QQ\in\mathcal{U}_0} \EE^{\QQ}\left[ U_a(\xi(X),X,M)\right]
\end{align*}
and $\mathcal{U}^*(\xi)\subset\mathcal{U}_0$, and such that the following holds:
\begin{itemize}
\item[a)] Uniform integrability: for any $\epsilon>0$, there exists $k\geq0$ such that, for all $\xi\in\mathcal{C}$ and all $\QQ\in\mathcal{U}_0$,
\begin{align*}  
&  \EE^{\QQ} \left[|U_{a}(\xi(X),X,M)|\,\mathbbm{1}_{\{|U_a(\xi(X),X,M)|\geq k\}} \right]\leq \varepsilon.
\end{align*} 
\item[b)] Upper uniform integrability: for any $\varepsilon>0$, there exists $k\geq0$ such that, for all $\xi\in\mathcal{C}$ and all $\QQ \in \mathcal{U}_0$, 
\begin{align*}
& \EE^{\QQ} \left[U_{p}(\xi(X),X,M)\,\mathbbm{1}_{\{U_{p}(\xi(X),X,M)\geq k\}} \right]\leq \varepsilon.
\end{align*}
\item[c)] Uniform integrability of $M$: for any $\varepsilon>0$, there exists $k\geq0$ such that, for all $\QQ\in\mathcal{U}_0$,
\begin{align*}
& \QQ(M\leq 1/k) \leq \varepsilon.
\end{align*}
\item[d)] There exists $C_0> 0$ such that, for all $\QQ\in\mathcal{U}_0$,
\begin{align*}
& \EE^{\QQ} \left[M\log(M)\right] \leq C_0. 
\end{align*}
\end{itemize}
\end{ass}

\smallskip

Let us comment on how one can verify the above assumption. Following the discussion in Section \ref{strong:opt:contract}, assume, for example, that $U_{a/p}(\xi,x,m)=\tilde U_{a/p}(\xi,x)\,m^q$, with $q\in(0,1)$ and with bounded positive $\tilde U_{a/p}$. Then, the Cauchy inequality, Vall\'ee-Poussin theorem, and the condition $\EE^\QQ M=1$, yield the properties (a), (b) of Assumption \ref{ass:1.5}, even with $\mathcal{U}_0=\mathcal{U}$. Using the strict convexity of $U_{a/p}$ and Jensen's inequality, we deduce (with the help of Lemma \ref{le:strong.sol.le1}) that every optimal response of the agent is in the form $\hat\QQ^\nu(dx,dm):=\mathbb{W}(dx)\,\delta_{\bar M(\nu)}(dm)$, for some $\nu\in\mathcal{A}$. Then, using the properties of the latter set (note that one can enforce stochastic or deterministic bounds on $\nu$ via a choice of $A,b$), one can verify the properties (c), (d) of Assumption \ref{ass:1.5}. The latter is illustrated in Section \ref{opt:brok:as:info}, where it is also shown how to handle the case $U_{a/p}(\xi,x,m)=\tilde U_{a/p}(\xi,x)\,m - 2\phi\,m\log m$, even with unbounded $\tilde U_{a/p}$.

\medskip

Before stating Theorem \ref{existence:weak:opt} (the main result of this section) it is convenient to introduce the following lemma. 
Although this lemma is, essentially, known, we are not aware of any reference with the precise statement needed herein and, therefore, provide the proof of the lemma in the  Appendix.
Herein, we denote by $\mathbb{H}_{loc}^2([0,T])$ the set of locally square integrable progressively measurable processes on the stochastic basis $(\Omega,\mathbb{F},\WW)$. In addition, for any random element on $(\Omega,\mathcal{F}_T)$ we will view it (whenever needed) as a random element on $(\Omega\times\RR,\mathcal{F}_T\otimes\mathcal{B}(\RR))$ without stating it explicitly. 

\begin{lemma}\label{lemma:condexpM}
Let $\QQ$ be a probability measure on $(\Omega\times\RR,\mathcal{F}_T\otimes\mathcal{B}(\RR))$ that satisfies conditions 1--4 in Definition \ref{def:U}. Then, there exists $\nu \in \mathbb{H}_{loc}^2([0,T])$ such that 
\begin{equation*}
    \EE^{\QQ}\left( M | \mathcal{F}_T \right) = \bar{M}_T(\nu),\quad \QQ-a.s.
\end{equation*}
Moreover, $\nu$ satisfies: 
\begin{equation*}
    \EE^{\QQ}\left[M\int_0^T\|\nu_s\|^2ds \right] =  2\EE^{\WW}\left[\bar{M}_T(\nu)\log\left( \bar{M}_T(\nu)\right) \right] < \infty. 
\end{equation*}
\end{lemma}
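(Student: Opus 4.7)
The plan is to obtain $\nu$ from the density of the first marginal of $\QQ$ via the martingale representation theorem on Wiener space, and then to verify the energy identity by a localized Girsanov computation.

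First I would set $\tilde M_t:=\EE^{\QQ}[M\mid \mathcal{F}_t]$, which, by condition 3 of Definition \ref{def:U}, may be regarded as an $\mathcal{F}_t$-measurable random variable on $(\Omega,\mathbb{F},\WW)$. Conditions 1 and 2 make $(\tilde M_t)$ a strictly positive $\WW$-martingale with $\tilde M_0=1$, while condition 4 and conditional Jensen's inequality give $\EE^{\WW}[\tilde M_T\log \tilde M_T]\leq \EE^{\QQ}[M\log M]<\infty$. The martingale representation theorem provides $H\in\mathbb{H}^2_{loc}([0,T])$ with $\tilde M_t=1+\int_0^t H_s^{\top}\,dX_s$, so $\tilde M$ has continuous paths. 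Since $\tilde M_T>0$ $\WW$-a.s.\ and $\tilde M$ is a continuous non-negative martingale, the hitting time of $0$ is absorbing and hence $>T$ $\WW$-a.s., so $\inf_{s\leq T}\tilde M_s>0$ $\WW$-a.s. Defining $\nu_s:=H_s/\tilde M_s$ then yields $\nu\in\mathbb{H}^2_{loc}([0,T])$, and It\^o's formula applied to $\log\tilde M$ gives $\tilde M_t=\bar M_t(\nu)$, which is the first assertion.

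For the energy identity, since $\int_0^T\|\nu_s\|^2\,ds$ is $\mathcal{F}_T$-measurable, condition 3 and the tower property reduce $\EE^{\QQ}[M\int_0^T\|\nu_s\|^2\,ds]$ to $\EE^{\tilde\QQ}[\int_0^T\|\nu_s\|^2\,ds]$, where $d\tilde\QQ/d\WW:=\bar M_T(\nu)$. Since $\bar M(\nu)$ is a true $\WW$-martingale (it coincides with the conditional-expectation martingale $\tilde M$), Girsanov's theorem makes $B^{\nu}_t:=X_t-\int_0^t\nu_s\,ds$ a $\tilde\QQ$-Brownian motion, and rewriting yields
\begin{equation*}
\log\bar M_t(\nu)=\int_0^t\nu_s^{\top}\,dB^{\nu}_s+\tfrac{1}{2}\int_0^t\|\nu_s\|^2\,ds.
\end{equation*}
Localizing with $\sigma_n:=\inf\{t:\int_0^t\|\nu_s\|^2\,ds\geq n\}\wedge T$ turns the stochastic integral into a genuine $\tilde\QQ$-martingale up to $\sigma_n$, so
\begin{equation*}
\tfrac{1}{2}\,\EE^{\tilde\QQ}\!\left[\int_0^{\sigma_n}\|\nu_s\|^2\,ds\right]=\EE^{\tilde\QQ}[\log\bar M_{\sigma_n}(\nu)]=\EE^{\WW}[\bar M_{\sigma_n}(\nu)\log\bar M_{\sigma_n}(\nu)],
\end{equation*}
the last equality following from the $\WW$-martingale property of $\bar M(\nu)$ and the tower property.

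The step I expect to be the main obstacle is the passage to the limit as $n\to\infty$. Monotone convergence handles the left-hand side immediately. The right-hand side requires noticing that $\bar M_t(\nu)\log\bar M_t(\nu)$ is a $\WW$-submartingale (convex function of a martingale) that is uniformly bounded below by $-1/e$ and whose terminal value is in $L^1(\WW)$ by the first paragraph. Optional stopping then gives $\EE^{\WW}[\bar M_{\sigma_n}(\nu)\log\bar M_{\sigma_n}(\nu)]\leq \EE^{\WW}[\bar M_T(\nu)\log\bar M_T(\nu)]$, while the lower bound $-1/e$ and Fatou's lemma supply the matching $\liminf$ inequality, delivering the convergence $\EE^{\WW}[\bar M_{\sigma_n}(\nu)\log\bar M_{\sigma_n}(\nu)]\to \EE^{\WW}[\bar M_T(\nu)\log\bar M_T(\nu)]<\infty$. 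This yields both the claimed identity and the finiteness. The reasoning is of a standard $L\log L$ flavor, but it must be executed cleanly because the finiteness of the two sides is only known after the limit is taken.
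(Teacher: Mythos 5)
Your proposal is correct and follows essentially the same route as the paper: the conditional-expectation martingale $\EE^{\QQ}[M\mid\mathcal{F}_t]$, martingale representation, $\nu$ as the ratio of the integrand to the martingale, It\^o's formula on the logarithm, and then a localized Girsanov identity closed by the submartingale upper bound together with a Fatou lower bound. Your treatment is in places slightly more explicit than the paper's (the absorbing-at-zero argument for strict positivity of the density martingale, and the $-1/e$ lower bound justifying Fatou), but these are refinements of the same argument rather than a different one.
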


In view of the feasibility assumption, there exists a sequence $\xi^n\in\mathcal{C}^a$ s.t. $$\lim_{n\rightarrow\infty}J_p(\xi^n)=V_p.$$
Consider the associated $\QQ^n\in \mathcal{U}^*(\xi^n)$. In the next theorem we show that the sequence $\{\xi_n\}$ contains a limit point $\xi^*$ which is an optimal contract in the relaxed formulation. 

\begin{theorem}\label{existence:weak:opt}
Under Assumptions \ref{ass:1} and \ref{ass:1.5}, there exist $\xi^*\in \mathcal{C}^a$ and a probability measure $\QQ^*\in\mathcal{U}$, such that $\xi^n$ and $\QQ^n$ converge along a subsequence, respectively, to $\xi^*$ (uniformly on compacts) and to $\QQ^*$ (in the weak topology). Moreover, $\xi^*$ is an optimal contract and $\QQ^*\in \mathcal{U}^*(\xi^*) $.
\end{theorem}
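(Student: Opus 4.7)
The plan is to extract subsequential limits in the respective topologies and then propagate optimality through passage to the limit, using the uniform integrability package in Assumption~\ref{ass:1.5}. By Definition~\ref{def:P}, $\{\xi^n\}\subset\mathcal{C}$ is equicontinuous on compacts and pointwise bounded, so Arzel\`a--Ascoli produces a subsequence converging uniformly on compacts to some $\xi^*$; closedness of $\mathcal{C}$ in this topology gives $\xi^*\in\mathcal{C}$. For the measures, every $\QQ^n$ has $\Omega$-marginal equal to $\WW$ (condition~3 of Definition~\ref{def:U}), which is tight, and its $\RR$-marginal is tight because Assumption~\ref{ass:1.5}(d) combined with de la Vall\'ee--Poussin makes the collection of laws of $M$ under $\QQ^n$ uniformly integrable. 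Prokhorov then yields a weakly convergent subsequence $\QQ^n\Rightarrow\QQ^*$ on $\Omega\times\RR$.

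The heart of the argument is to verify $\QQ^*\in\mathcal{U}$, i.e.\ that conditions 1--5 pass to the limit. Condition~3 is inherited from convergence of marginals. Uniform integrability of $M$ under $\QQ^n$ gives condition~1 ($\EE^{\QQ^*}M=1$), and Fatou applied to the convex map $m\mapsto m\log m$ together with (d) yields condition~4. Assumption~\ref{ass:1.5}(c) forbids mass from concentrating near $M=0$ in the limit and so gives condition~2. The delicate step is condition~5: for any bounded $\RR^l_+$-valued $\eta$ depending on $x_r$, $r\le s\wedge\tau_N$, and continuous at $\WW$-a.e.\ path, the integrand $(x,m)\mapsto m\,\eta(x)^\top(Z_{t\wedge\tau_N}(x)-Z_{s\wedge\tau_N}(x))$ factors as a uniformly integrable $m$-factor times a bounded, $\WW$-a.s.\ continuous $x$-functional (boundedness comes from the localisation at $\tau_N$ and the bounds on $A,b$; $\WW$-a.s.\ continuity of $\tau_N$ holds off the null set where the path touches level $N$ tangentially). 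A truncation of $m$ at level $K$ combined with continuous-function approximation of the $x$-functional lets the inequality pass to the weak limit, with the tail in $m$ controlled by Assumption~\ref{ass:1.5}(d).

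I next show $\QQ^*\in\mathcal{U}^*(\xi^*)$. By Assumption~\ref{ass:1.5}(a) the family $\{U_a(\xi(X),X,M):\xi\in\mathcal{C},\,\QQ\in\mathcal{U}_0\}$ is uniformly integrable, and continuity of $U_a$ (Assumption~\ref{ass:1}(a)) together with uniform convergence of $\xi^n$ on compacts yields, via a standard truncation-and-continuous-mapping argument, $J_a(\QQ^n,\xi^n)\to J_a(\QQ^*,\xi^*)$. For any competitor $\tilde\QQ\in\mathcal{U}$, the hypothesis in Assumption~\ref{ass:1.5} lets me restrict attention to $\mathcal{U}_0$, so the same argument gives $J_a(\tilde\QQ,\xi^n)\to J_a(\tilde\QQ,\xi^*)$; combined with the optimality inequality $J_a(\tilde\QQ,\xi^n)\le J_a(\QQ^n,\xi^n)$ and letting $n\to\infty$, this yields $J_a(\tilde\QQ,\xi^*)\le J_a(\QQ^*,\xi^*)$. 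Hence $\QQ^*$ maximizes $J_a(\cdot,\xi^*)$, and $V_a(\xi^*)=J_a(\QQ^*,\xi^*)=\lim V_a(\xi^n)\ge R_a$, placing $\xi^*\in\mathcal{C}^a$.

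For the principal, Assumption~\ref{ass:1.5}(b) supplies the upper uniform integrability needed to get $\EE^{\QQ^*}U_p(\xi^*(X),X,M)\ge\limsup_n\EE^{\QQ^n}U_p(\xi^n(X),X,M)$ by truncating the positive part of $U_p$ at level $k$ and invoking continuity on the truncated piece. Selecting $\QQ^n$ inside $\mathcal{U}^*(\xi^n)$ to approach the sup defining $J_p(\xi^n)$ up to $1/n$, the right-hand side equals $V_p$, and since $\QQ^*\in\mathcal{U}^*(\xi^*)$ we conclude $J_p(\xi^*)\ge V_p$, i.e.\ $\xi^*$ is optimal. The main obstacle I anticipate is the stability of condition~5 under the weak limit: the test functional is an unbounded product of $M$ and a pathwise increment of $Z$, and ensuring it behaves like a bounded continuous test function requires simultaneously exploiting the stopping at $\tau_N$, the $\WW$-a.s.\ continuity of $\tau_N$, and the $M\log M$ bound from Assumption~\ref{ass:1.5}(d)---this is the ``remarkable'' stability the authors flag in the commentary preceding that assumption.
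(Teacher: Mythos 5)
Your overall architecture matches the paper's: Arzel\`a--Ascoli for $\{\xi^n\}$, tightness and Prokhorov for $\{\QQ^n\}$, verification that $\QQ^*\in\mathcal{U}$, and then one-sided semicontinuity estimates (truncation plus Fatou/Vitali under the uniform integrability of Assumption \ref{ass:1.5}) to transfer optimality of the agent and the principal to the limit. Those outer layers are essentially correct and coincide with what the authors do.

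The genuine gap is in your treatment of condition 5 of Definition \ref{def:U}, which you yourself flag as the main obstacle but then dispose of incorrectly. You assert that $x\mapsto \eta(x)^\top\bigl(Z_{t\wedge\tau_N}(x)-Z_{s\wedge\tau_N}(x)\bigr)$ is a \emph{bounded, $\WW$-a.s.\ continuous} functional of the path, so that a truncation in $m$ plus the continuous mapping theorem passes the inequality to the weak limit. This is false for the stochastic-integral part of the increment: $\int_{s\wedge\tau_N}^{t\wedge\tau_N}A_r\,dX_r$ is neither bounded (stopping at $\tau_N$ only bounds the integrand $A$, not the It\^o integral, which has Gaussian-type tails) nor a continuous functional of $x$ in the uniform topology (the It\^o map is not pathwise continuous; this is precisely why rough-path-type or UT-condition arguments exist). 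Only the finite-variation piece $\int \eta^\top b_r\,dr$ admits the bounded-continuous-functional argument you describe. The paper's proof spends most of its length on exactly this point: it passes to the tilted measures $d\tilde\QQ^n=M\,d\QQ^n$, shows $\tilde\QQ^n\Rightarrow\tilde\QQ^\infty$, uses Lemma \ref{lemma:condexpM} and Girsanov to write $X$ under $\tilde\QQ^n$ as $\nu^n_t\,dt+dB^n_t$ with $\EE^{\tilde\QQ^n}\int_0^T\|\nu^n_s\|^2ds$ uniformly bounded by the entropy bound of Assumption \ref{ass:1.5}.d, invokes the Kurtz--Protter theorem on weak convergence of stochastic integrals after a Skorokhod representation, and finally upgrades convergence in probability to convergence of expectations via an It\^o-isometry $L^2$ bound. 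None of these ingredients appears in your proposal, and without them (or an equivalent substitute, e.g.\ an integration-by-parts reduction that would require extra structure on $A$ not assumed in the theorem) the closedness of condition 5 under weak limits is not established.
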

\begin{proof}
We will show that the sequence $(\xi^n)_{n\in \mathbb{N}} \subset \mathcal{C}$ is relatively compact. By definition, $(\xi^n)_{n\in \mathbb{N}}$ is an equicontinuous family. In addition, for every $x \in C([0,T],\RR^k)$ the sequence $(\xi^n(x))_{n\in \mathbb{N}}$ is bounded by definition. Then, by Ascoli's theorem applied on the topology of the compact convergence (\cite{Kelley}, Theorem 18) there exists a subsequence uniformly convergent on compact sets. Without lost of generality we assume that $(\xi^n)_{n=1}^\infty$ is convergent and $\xi^*$ is its limit. The later implies that for any compact set $K\subset C([0,T],\RR^k)$:
\begin{equation*}
    \lim_{n \rightarrow \infty} \sup_{x \in K} |\xi^n(x)-\xi^*(x) | = 0.
\end{equation*}
Using that $\mathcal{C}$ is closed with respect to the topology of the compact convergence we obtain that  $\xi^* \in \mathcal{C}$.
\\
On the other hand, we will show that the sequence $(\QQ^n)_{n\in \NN}$ has a convergent subsequence. It is enough to show that the sequence is tight. We observe that the $(1)$-marginal sequence is tight as it is the Wiener measure for every $n\in \mathbb{N}$. The $(2)$-marginal sequence is tight as well. Indeed, by Markov's inequality and property 1 in Definition \ref{def:U}:
\begin{equation*}
    \lim_{N \rightarrow \infty}\sup_{n \in \mathbb{N}} \mathbb{Q}^n\left(|M| \geq N \right) \leq \lim_{N\rightarrow \infty}\frac{1}{N} = 0. 
\end{equation*}
Hence, using Prohorov's theorem (\cite{Billingsley}, Theorem 5.1) there exits a weakly convergent subsequence $\left(\QQ^{n_k} \right)_{k \in \mathbb{N}}$. We denote $\QQ^*$ its limit. Without lost of generality we will denote $(\QQ^n)_{n\in \NN}$ the convergent subsequence. Using Assumption \ref{ass:1.5} we can choose a sequence such that $(\QQ^n)_{n\in\NN}\subset \mathcal{U}_0$. 

It remains to show that $\QQ^* \in \mathcal{U}$. First, we will show that $\QQ^*$ satisfies property $1$ of definition $2$.
Indeed, using Assumption \ref{ass:1.5} we have that for all $\epsilon >0$ there exits $N_1,N_2 \in \NN$ such that
\begin{equation*}
    \begin{split}
    \limsup_{n \rightarrow \infty} \EE^{\QQ^n}M  &\leq  \limsup_{n \rightarrow \infty} \EE^{\QQ^n}\left[M\wedge N_1\right] + \limsup_{n \rightarrow \infty} {\EE}^{\QQ^n}\left[\left(M-N_1\right)^{+}\right] \\
    &\leq  {\EE}^{\QQ^*} \left[M \wedge N_1\right]+\epsilon \leq {\EE}^{\QQ^*} M  + \epsilon. 
    \end{split},
    \end{equation*}
    And
 \begin{equation*}
    \begin{split}
    \liminf_{n \rightarrow \infty} \EE^{\QQ^n}M  &\geq \liminf_{n \rightarrow \infty} {\EE}^{\QQ^n}\left[M\vee (-N_2)\right]+ \liminf_{n \rightarrow \infty} {\EE}^{\QQ^n}\left[\left(M+N_2\right)^{-} \right] \\
    &\geq {\EE^{\QQ^*}} \left[M \vee (-N_2)\right]- \epsilon \geq \EE^{\QQ^*} M  - \epsilon. 
    \end{split}
    \end{equation*}   
    Hence, combining the previous two equations we obtain: 
    \begin{equation*}
        \lim_{n \rightarrow \infty } \EE^{\QQ^n} M = \EE^{\QQ^*} M. 
    \end{equation*}
Now will well prove that $\QQ^*$ satisfies condition $2$ in definition \ref{def:U}. Using Portmanteau theorem we have that : 
\begin{equation*}
   \QQ^*(M \geq 0 ) \geq  \limsup_{n \rightarrow \infty} \QQ^n(M \geq 0) = 1.
\end{equation*}
By a similar argument, using Assumption \ref{ass:1.5}.$c$ and Portmanteau theorem we obtain that for every $\epsilon > 0$, there exits $k > 0$ such that:
\begin{equation*}
   \QQ^*(M = 0) \leq \QQ^*(M < 1/k ) \leq \limsup_{n \rightarrow \infty}\QQ^n(M < 1/k) \leq \sup_{n \in \mathbb{N}}\QQ^n(M < 1/k) < \epsilon. 
\end{equation*}
Hence, $\QQ^*(M > 0 ) = 1$. It is trivial to check that $\QQ^*$ satisfies property $3$ in Definition \ref{def:U}. Using that $(\QQ^n)_{n\in \mathbb{N}} \subset \mathcal{U}_0$ and Portmanteau theorem we obtain:
\begin{equation*}  
    \EE^{\QQ^*}[ M\log(M)] \leq \liminf_{n \rightarrow \infty}\EE^{\QQ^n}[ M\log(M)] < \infty.  
\end{equation*}
Therefore, $\QQ^*$ satisfies property $4$ in Definition \ref{def:U}. Finally, we just need to show that $\QQ^*$ satisfies property $5$ in definition \ref{def:U}.
We will show that the 
\begin{equation*}
    \lim_{n\rightarrow \infty}\EE^{\QQ^n}\left[M\eta^{\top} \int_{s\wedge\tau_N}^{t\wedge\tau_N}\left(b_rdr+A_rdX_r\right) \right] = \EE^{\QQ^*}\left[M\eta^{\top} \int_{s\wedge\tau_N}^{t\wedge\tau_N}\left(b_rdr+A_rdX_r\right) \right]. 
\end{equation*}
In order to prove the previous expression we will show it separately for the finte variation part and the stochastic integral. We
introduce the probability measures $\tilde{\QQ}^n$ defined by:
\begin{equation*}
    d\tilde{\QQ}^n := Md\QQ^n. 
\end{equation*}
Firstly, we will show that $\tilde{\QQ}^n$ converges weakly to the measure $\tilde{\QQ}^{\infty}$ defined by
\begin{equation*}
      d\tilde{\QQ}^{\infty} := Md\QQ^*. 
\end{equation*}
Indeed, let $f \in C_b\left(C\left([0,T],\RR^k\right)\times \RR\right)$. We observe: 
\begin{equation*}
\lim_{N\rightarrow\infty}\sup_{n\in\NN}\EE^{\QQ^n}\left[|Mf(X,W)|\mathbbm{1}_{|Mf(X,W)|>N}\right] \leq K\lim_{N\rightarrow\infty}\sup_{n\in\NN}\EE^{\QQ^n}\left[M\mathbbm{1}_{M>|\frac{N}{K}|}\right], \end{equation*}
where $K = 1+\sup_{(x,m)\in C([0,T],\RR^k),\RR }|f(x,m)|$.

Using Assumption \ref{ass:1.5}.$c$ we have that the previous limit converges to zero.
Hence, for all $\epsilon>0$ there exit $N_1, N_2 \in \NN$ such that: 
\begin{equation*}    
\begin{split}
&\limsup_{n\rightarrow \infty}\EE^{\tilde{\QQ}^n}\left[f(X,M)\right] = \limsup_{n\rightarrow \infty}\EE^{\QQ^n}\left[Mf(X,M)\right] \\
&\leq \limsup_{n\rightarrow\infty}\EE^{\QQ^n}\left[(Mf(X,M)-N_1)^{+}\right]+\limsup_{n\rightarrow\infty}\EE^{\QQ^n}\left[(Mf(X,M))\wedge N_1\right] \\
&\leq \epsilon + \EE^{\QQ^*}\left[(Mf(X,M))\wedge N_1\right] \leq  \EE^{\QQ^*}\left[Mf(X,M)\right]+\epsilon
=\EE^{\tilde{\QQ}^{\infty}}\left[f(X,M)\right]+\epsilon
\end{split}
\end{equation*}
and
\begin{equation*}    
\begin{split}
&\liminf_{n\rightarrow \infty}\EE^{\tilde{\QQ}^n}\left[f(X,M)\right] = \liminf_{n\rightarrow \infty}\EE^{\QQ^n}\left[Mf(X,M)\right] \\
&\geq \liminf_{n\rightarrow\infty}\EE^{\QQ^n}\left[(Mf(X,M)+N_2)^{-}\right]+\liminf_{n\rightarrow\infty}\EE^{\QQ^n}\left[(Mf(X,M))\vee (-N_2)\right] \\
&\geq -\epsilon + \EE^{\QQ^*}\left[(Mf(X,M))\vee (-N_2)\right] \geq  \EE^{\QQ^*}\left[Mf(X,M)\right]-\epsilon
= \EE^{\tilde{\QQ}^{\infty}}\left[f(X,M)\right]-\epsilon
\end{split}
\end{equation*}
The later implies
\begin{equation*}
    \lim_{n\rightarrow \infty}\EE^{\tilde{\QQ}^n}\left[f(X,M) \right] = \EE^{\tilde{\QQ}^{\infty}}\left[f(X,M) \right].
\end{equation*}
Therefore, 
$(\tilde{\QQ}^n)_{n=1}^\infty$ converges weakly to $\tilde{\QQ}^{\infty}$.
\\
Moreover, we observe that for all $f \in C_b\left(C([0,T],\RR^k),\RR\right)$, and $n \in \NN \sqcup \{ \infty\}$:
\begin{equation*}
    \EE^{\tilde{\QQ}^n}\left[f(X) \right] = \EE^{\QQ^n}\left[M f(X) \right] = \EE^{\WW}\left[\EE^{\QQ^n}(M|\mathcal{F}_T^X)f(X) \right].
\end{equation*}
Using Lemma \ref{lemma:condexpM} we have that there exits $\nu^n \in \mathbb{H}^2_{loc}([0,T])$ such that
\begin{equation*}
    \EE^{\QQ^n}(M|\mathcal{F}_T^X) = \bar{M}_T(\nu^n), \quad n \in \NN \sqcup \{\infty\}.
\end{equation*}
Therefore, 
\begin{equation}\label{equiv.tildeQ}
d\tilde{\QQ}^n|_{\mathcal{F}_T^X} = M_T(\nu^n)d\WW,\quad n \in \NN \sqcup \{ \infty \}.
\end{equation}
\\
We introduce the mapping $T: C([0,T],\RR^k) \mapsto \RR$ defined by
\begin{equation*}
    T(x) = \int_{s\wedge\tau_N(x)}^{t\wedge\tau_N(x)}\eta(x)^{\top}b_{r}(x)dr.
\end{equation*}
We observe that $T$ is bounded as $\eta$ and $b_{.\wedge\tau_N}$ are bounded. Moreover, using Lemma $5.9$ in \cite{DeLaRue}: 
\begin{equation*}
    \WW\left( \{x \in C([0,T],\RR^k) : \tau_N \text{ is continuous in } x \} \right) = 1. 
\end{equation*}
By \eqref{equiv.tildeQ} we have: 
\begin{equation}\label{crossing.Qtilde}
     \tilde{\QQ}^{\infty}\left( \{x \in C([0,T],\RR^k) : \tau_N \text{ is continuous in } x \} \right) = 1. 
\end{equation}
The last observation implies that the set of discontinuities of $T$ has $\tilde{\QQ}^\infty $-measure $0$. Hence, by the continuous mapping theorem we obtain: 
\begin{equation}\label{eq:limitb}
    \begin{split}
    &\lim_{n\rightarrow \infty}  \EE^{\QQ^n}\left[M\eta^{\top} \int_{s \wedge\tau_N}^{t\wedge\tau_N}b_rdr\right] = \lim_{n\rightarrow\infty}\EE^{\tilde{\QQ}^n}\left[\eta^{\top} \int_{s\wedge\tau_N}^{t\wedge\tau_N}b_rdr\right] \\  
    &=\EE^{\tilde{\QQ}}\left[\eta^{\top} \int_{s\wedge\tau_N}^{t\wedge\tau_N}b_rdr\right] =\EE^{\tilde{\QQ}^*}\left[M\eta^{\top} \int_{s\wedge\tau_N}^{t\wedge\tau_N}b_rdr\right].
    \end{split}
\end{equation}
Now, we will show that the stochastic integrals converge. Applying Skorokhod's representation theorem (\cite{Billingsley}, Theorem 6.7) there exists a sequence $(\hat{X}^n)_{n\in \NN}$ converging a.s. to $\hat{X}^\infty$ in some probability space $(\hat\Omega,\hat{\mathcal{F}},\hat{\PP})$, where $\hat\PP\circ (\hat{X}^n)^{-1}=\tilde{\QQ}^n$ and $\hat\PP\circ (\hat{X}^\infty)^{-1}=\tilde{\QQ}^\infty$. As the mapping $A$ is continuous we have that $\left(\hat{X}^n,A(\hat{X}^n)\right)$ converges $\hat{\PP}$ a.s. to $\left(\hat{X}^\infty,A(\hat{X}^\infty)\right)$. Using Lemma \ref{lemma:condexpM}, we obtain that for every $n \in \NN$ there exist $\FF$-adapted processes $\nu^n \in \mathbb{H}^2_{loc}([0,T])$ that satisfy
\begin{equation*}
    \begin{split}
    \EE^{\QQ^n}\left(M|\mathcal{F}_T^X \right) = \bar{M}_T(\nu^n),
   \quad \EE^{\QQ^*}\left(M|\mathcal{F}_T^X \right) = \bar{M}_T(\nu^\infty).
    \end{split}
\end{equation*}
Using \eqref{equiv.tildeQ}, the later implies by Girsanov theorem (\cite{KaratzasShreve}, Theorem 5.1):
\begin{equation*}
    \begin{split}
    dX_t &= \nu^n_t dt + dB^n_t, \quad\tilde{\QQ}^n- a.s., \\
    dX_t &=  \nu^\infty_t dt + dB^\infty_t, \quad \tilde{\QQ}^\infty-a.s.,
     \end{split}
\end{equation*}
where $B^n$ is a $(\Omega,\tilde{\QQ}^n,\FF)$ Brownian motion.
Using Skorokhod's representation theorem (\cite{Billingsley}, Theorem $6.7$) we have
\begin{equation*}
    \begin{split}
         d\hat{X}^n_t &= \hat{\nu}^n_t dt + d\hat{B}^n_t, \quad \hat{\PP}-a.s., \\
         d\hat{X}^{\infty}_t &= \hat{\nu}^\infty_t dt + d\hat{B}^\infty_t, \quad \hat{\PP}-a.s.,
    \end{split}
\end{equation*}
where, for all $t \in [0,T]$, $\hat{\nu}^n_t:=\nu^n_t \circ \hat{X}^n$, and $\hat{B}^n_t:=B^n_t \circ \hat{X}^n$ are standard brownian motions defined on $(\hat{\Omega},\hat{\PP} ,\mathbb{F}^{\hat{X}^n})$.
Moreover, we observe that 
\begin{equation*}
\begin{split}
 &\hat{\EE}\left[ \int_0^T |\hat{\nu}^n_s|ds\right] \leq T^{1/2} \left(\hat{\EE}\left[\int_0^T (\hat{\nu}^n_s)^2ds\right]\right)^{1/2} 
 = (2T)^{1/2}\left(\EE^{\tilde{\QQ}^n}\log\bar{M}_T(\nu^n) \right)^{1/2} \\
&= (2T)^{1/2}\left(\EE^{\QQ^n}\left[\bar{M}_T(\nu^n)\log\bar{M}_T(\nu^n)\right]\right)^{1/2} 
\leq (2T)^{1/2}\left(\EE^{\WW} \left[M \log(M)\right]\right)^{1/2} \leq \sqrt{2T C_0},
\end{split}
\end{equation*}
where we used Jensen's inequality, Cauchy-Schwartz inequality and Assumption \ref{ass:1.5}.c.
We also observe: 
\begin{equation*}
    \hat{\EE} \left([\hat{B}^n]_T \right) = T, \quad n \in \NN \sqcup \{\infty\}.
\end{equation*}
Hence, applying (\cite{ProtterKurtz}, Theorem 2.2), we have that the processes $(\hat{X}^n,\int_0^. A_s(\hat{X}^n_s)d\hat{X}^n_s)$ converge in probability $\hat{\PP}$ to $\left(\hat{X}^\infty, \int_0^.A_s(\hat{X}^\infty_s)d\hat{X}^\infty_s\right)$. 
Then, defining the processes $\hat{L}^n:=\int_0^. A_r(\hat{X}^n)d\hat{X}^n_r$, $n \in \NN \sqcup \{ \infty\}$, we have for all $\epsilon > 0$: 
\begin{equation*}
    \lim_{n\rightarrow\infty} \hat{\PP} \left(\|\hat{L}^n-\hat{L}^\infty \|_{C([0,T],\RR^k)} \geq \epsilon \right) = 0.  
\end{equation*}
Using again Lemma $5.9$ in \cite{DeLaRue}:
\begin{equation*}
     \tilde{\QQ}^{\infty}\left( \{x \in C([0,T],\RR^k) : \tau_N \text{ is continuous in } x \} \right) = 1,
\end{equation*}
which implies
\begin{equation*}
     \hat{\PP}\left( \{\omega \in \hat{\Omega} : \tau_N \text{ is continuous in }\hat{X}^\infty(\omega)\} \right) = 1.
\end{equation*}
Therefore, by the continuous mapping theorem:
\begin{equation*}
    \lim_{n\rightarrow \infty} \hat{\PP} \left(\vert\hat{L}^n_{t\wedge\tau_N(\hat{X}^n)}-\hat{L}^\infty_{t\wedge\tau_N(\hat{X}^\infty)}\vert \geq \epsilon \right) = 0. 
\end{equation*}
The later implies that $\hat{L}^n_{t\wedge \tau_N(\hat{X}^n)}$ converges in probability $\hat{\PP}$ to $\hat{L}^{\infty}_{t\wedge\tau_N(\hat{X}^\infty)}$. Analogously, $\hat{L}^{n}_{s\wedge\tau_N(\hat{X}^n)}$ converges to $\hat{L}^{\infty}_{s\wedge\tau_N(\hat{X}^\infty)}$ in probability $\hat{\PP}$. Therefore, 
$\hat{L}^{n}_{t\wedge\tau_N(\hat{X}^n)}-\hat{L}^{n}_{s\wedge\tau_N(\hat{X}^n)} $ converges in probability $\hat{\PP}$ to $\hat{L}^{\infty}_{t\wedge\tau_N(\hat{X}^\infty)}-\hat{L}^{\infty}_{s\wedge\tau_N(\hat{X}^\infty)} $ . Taking a subsequence that converge $\hat{\PP}$-a.s. we obtain:  
\begin{equation}\label{eqn:prop4}
    \begin{split}   
    &\EE^{\QQ^*}\left[M\eta(X)^{\top}\int_{s\wedge\tau_N}^{t\wedge \tau_N}A_rdX_r \right] =  \EE^{\tilde{\QQ}^\infty}\left[\eta(X)^{\top}\int_{s\wedge \tau_N}^{t\wedge\tau_N}A_rdX_r \right] \\
    &=\hat{\EE}\left[\eta(\hat{X}^\infty)^{\top} \int_{s\wedge \tau_N(\hat{X}^\infty)}^{t\wedge \tau_N(\hat{X}^\infty)}A_r(\hat{X}^\infty)d\hat{X}^{\infty}_r \right] 
    = \hat{\EE}\left[\lim_{n\rightarrow \infty}\eta(\hat{X}^n)^{\top} \int_{s\wedge\tau_N(\hat{X}^n)}^{t\wedge\tau_N(\hat{X}^n)}A_r(\hat{X}^n)d\hat{X}^n_r \right].
    \end{split}
\end{equation}

Finally, we observe that the sequence $\left(\eta(\hat{X}^n)^{\top}\int_{s\wedge\tau_N(\hat{X}^n)}^{t\wedge\tau_N(\hat{X}^n)}A_r(\hat{X}^n)d\hat{X}^n_r\right)_{n\in \mathbb{N}}$ is $\hat{\PP}$ uniformly integrable. Indeed, it is enough to check that the sequence is bounded in
$L^2(\hat{\Omega},\hat{\mathcal{F}},\hat{\PP})$. By Ito's Isometry: 
\begin{equation*}
\begin{split}
&\sup_{n\in \mathbb{N}}\hat{\EE}\left[\left(\int_{s\wedge\tau_N(\hat{X}^n)}^{t\wedge\tau_N(\hat{X}^n)}\eta(\hat{X}^n)^{\top}A_r(\hat{X}^n)d\hat{X}^n_r \right)^2 \right] 
= \sup_{n\in \mathbb{N}}\EE^{\tilde{\QQ}^n}\left[\left(\int_{s\wedge\tau_N}^{t\wedge\tau_N}\eta(X)^{\top}A_r(X)dX_r \right)^2 \right] \\
&= \sup_{n\in \mathbb{N}}\EE^{\tilde{\QQ}^n}\left[\int_{s\wedge\tau_N}^{t\wedge \tau_N}|\eta^\top A_r\nu^n_r|^2dr \right] 
\leq \|\eta^\top A_{.\tau_N}\|^2_{C([0,T],\RR^k)}\sup_{n\in \mathbb{N}}\EE^{\tilde{\QQ}^n}\left[\int_{0}^{T}\| \nu_r^n\|^2dr \right] \\
&=  2\|\eta^\top A_{.\tau_N}\|_{C([0,T],\RR^k)}\sup_{n\in \mathbb{N}}\EE^{\WW}\left[\bar{M}_{T}(\nu^n)\log(\bar{M}_{T}(\nu^n)\right]
\leq 2\|\eta^\top A_{.\tau_N}\|_{C([0,T],\RR^k)}\sup_{n\in \mathbb{N}}\EE^{\QQ^n}\left[M\log(M) \right] < \infty, 
\end{split}
\end{equation*}
where we used Cauchy-Schwartz inequality, Lemma \ref{lemma:condexpM}, Jensen's inequality, and the inclusion $(\QQ^n)_{n\in \NN} \subset \mathcal{U}_0$. 
Therefore, using \eqref{eqn:prop4}:
\begin{equation*}
    \begin{split}
        &\EE^{\QQ^*}\left[M\eta^{\top} \int_s^{t}A_rdX_r \right] = \hat{\EE}\left[\lim_{n\rightarrow \infty}\eta(\hat{X}^n)^{\top}\int_{s\wedge\tau_N(\hat{X}^n)}^{t\wedge\tau_N(\hat{X}^n)}A_r(\hat{X}^n)d\hat{X}^n_r \right]\\
        &= \lim_{n\rightarrow \infty}\hat{\EE}  \left[\eta(\hat{X}^n)^{\top} \int_{s\wedge \tau_N(\hat{X}^n)}^{t\wedge \tau_N(\hat{X}^n)}A_r(\hat{X}^n)d\hat{X}^n_r \right] 
        = \lim_{n\rightarrow \infty}\EE^{\QQ^n}  \left[M\eta^{\top} \int_{s\wedge\tau_N}^{t\wedge\tau_N}A_r(X)dX_r \right].
    \end{split}
\end{equation*}
Combining the last equation with \eqref{eq:limitb} we obtain: 
\begin{equation*}
   \EE^{\QQ^*}\left[M\eta^{\top} (Z_{t\wedge\tau_N}-Z_{s\wedge\tau_N}) \right] = \lim_{n\rightarrow \infty}  \EE^{\QQ^n}\left[M\eta^{\top} (Z_{t\wedge\tau_N}-Z_{s\wedge\tau_N}) \right] \leq 0. 
\end{equation*}
Hence, $\QQ^* \in \mathcal{U}$.

Without lost of generality, we assume that the convergence occurs along the original sequence.

\medskip

Let us show that $\xi^*\in \mathcal{C}^a$, under Assumptions \ref{ass:1} and \ref{ass:1.5}. To this end, we apply the Skorokhod's representation theorem to obtain a sequence $\{(\check X^n,\check M^n)\}$ converging a.s., on some probability space $(\check\Omega,\check{\mathcal{F}},\check{\PP})$, to $(\check{X}^\infty,\check{M}^\infty)$, such that $\check\PP\circ (\check{X}^n,\check{M}^n)^{-1}=\QQ^n$ and $\check\PP\circ (\check{X}^\infty,\check{M}^\infty)^{-1}=\QQ^*$. Then, due to Assumption \ref{ass:1.5}.a, for any $\varepsilon>0$ there exists $N\geq0$ such that
\begin{align*}
& R_a\leq \limsup_{n\rightarrow \infty}J_a(\QQ^n,\xi^n)=\limsup_{n\rightarrow \infty}\EE^{\QQ^n} U_a\left(\xi^n(X),X,M\right)
= \limsup_{n\rightarrow \infty}\EE^{\QQ^n} U_a\left(\xi^n(X),X,M\right)\wedge N \\
&+ \limsup_{n\rightarrow \infty}\EE^{\QQ^n} \left[\left(U_a\left(\xi^n(X),X,M\right)-N\right)^+\right]
\leq \limsup_{n\rightarrow \infty}\check{\EE}\left[ U_a\left(\xi^n(\check{X}^n),\check{X}^n,\check{M}^n\right)\wedge N\right] + \varepsilon.
\end{align*}
Using Fatou's lemma, we obtain:
\begin{align*}
&\limsup_{n\rightarrow\infty}\check{\EE} \left[U_a\left(\xi^n(\check X^n),\check X^n,\check M^n\right)\wedge N\right]
\leq \check{\EE} \left[ U_a\left(\xi^*(\check X^\infty),\check X^\infty,\check M^\infty\right)\wedge N\right]\\
&= \check{\EE}\left[U_a\left(\xi^*(\check X^\infty),\check X^\infty,\check M^\infty\right)\right]
= \EE^{\QQ^*} U_a\left(\xi^*(X),X,M\right) = J_a(\QQ^*,\xi^*).
\end{align*}
Thus, we obtain $J(\QQ^*,\xi^*)\geq R_a-\varepsilon$, for any $\varepsilon>0$, which yields $V_a(\xi^*)\geq R_a$ and in turn $\xi^*\in \mathcal{C}^a$.

\smallskip

Next, we show that $\QQ^*\in\mathcal{U}^*(\xi^*)$.
To this end, we assume the contrary: i.e., that there exists $\tilde\QQ\in\mathcal{U}$ such that $J_a(\QQ^*,\xi^*)< J_a(\tilde\QQ,\xi^*)$.
The derivation in the preceding paragraph and the optimality of $\QQ^n$ yield
\begin{align*}
& J_a(\QQ^*,\xi^*) \geq \limsup_{n\rightarrow\infty} J_a(\QQ^n,\xi^n)
\geq \limsup_{n\rightarrow\infty} J_a(\tilde\QQ,\xi^n).
\end{align*}
In addition, due to Assumption \ref{ass:1.5}.$a$, for any $\varepsilon>0$ there exists $N\geq0$ such that
\begin{align*}
& \liminf_{n\rightarrow \infty}J_a(\tilde\QQ,\xi^n)=\liminf_{n\rightarrow \infty}\EE^{\tilde\QQ} U_a\left(\xi^n(X),X,M\right)
\geq \liminf_{n\rightarrow \infty}\EE^{\tilde\QQ}  U_a\left(\xi^n(X),X,M\right)\vee (-N) \\
&+ \liminf_{n\rightarrow \infty}\EE^{\tilde\QQ}\left[ \left( U_a\left(\xi^n(X),X,M\right)+N\right)^-\right]
\geq \liminf_{n\rightarrow \infty}\EE^{\tilde\QQ}U_a\left(\xi^n(X),X,M\right)\vee (-N) - \varepsilon.
\end{align*}
Using Fatou's lemma, we obtain:
\begin{align*}
&\liminf_{n\rightarrow\infty} \EE^{\tilde\QQ}   U_a\left(\xi^n(X),X,M\right)\vee (-N)
\geq \EE^{\tilde\QQ} \bar U_a\left(\xi^*(X),X,M\right)\vee (-N)\\
&\geq \EE^{\tilde\QQ} \bar U_a\left(\xi^*(X),X,M\right) = J_a(\tilde\QQ,\xi^*).
\end{align*}
The above yields $\limsup_{n\rightarrow\infty} J_a(\tilde\QQ,\xi^n)\geq J_a(\tilde\QQ,\xi^*)$ and leads to the desired contradiction.
\\
\smallskip
Finally, we show that $J_p(\xi^*)=V_p$. Indeed, due to Assumption \ref{ass:1.5}.b, for any $\varepsilon>0$, there exists $N\geq0$ such that
\begin{align*}
&V_p = \lim_{n\rightarrow \infty} J_p(\xi^n) \leq \limsup_{n\rightarrow\infty}\EE^{\QQ^n} U_p\left(\xi^n(X),X,M\right)= \limsup_{n\rightarrow\infty}\EE^{\QQ^n} 
U_p\left(\xi^n(X),X,M\right)\wedge N \\
&+ \limsup_{n\rightarrow\infty}\EE^{\QQ^n} \left(U_p\left(\xi^n(X),X,M\right)-N\right)^+
\leq \limsup_{n\rightarrow\infty}\check{\EE} U_p\left(\xi^n(\check X^n),\check X^n,\check M^n\right)\wedge N + \varepsilon.
\end{align*}

Using Fatou's lemma we obtain:
\begin{align*}
&\limsup_{n\rightarrow\infty}\check{\EE} U_p\left(\xi^n(\check X^n),\check X^n,\check M^n\right)\wedge N
\leq \check{\EE} U_p\left(\xi^*(\check X^ \infty),\check X^\infty,\check M^\infty\right)\wedge N\\
&= \EE^{\QQ^*} U_p\left(\xi^*(X),X,M\right)\wedge N
\leq
\EE^{\QQ^*} U_p\left(\xi^*(X),X,M\right) \leq J_p(\xi^*),
\end{align*}
which yields the desired $J_p(\xi^*)=V_p$.

\smallskip

The above analysis proves that $\xi^*$ is an optimal contract and that $\QQ^*$ is an optimal response of the agent.
\qed
\end{proof}

\section{Solution to the problem in strong formulation}
\label{solution_strong_problem}

The goal of this section is to show, under an additional concavity assumption, any optimal contact $\xi^*$ in the relaxed problem is also optimal in the strong formulation. We start with the following auxiliary lemma, which is a natural continuation of Lemma \ref{lemma:condexpM} and which shows that condition 4 in Definition \ref{def:U} yields condition (b) in Definition \ref{def:A}. We recall that $\mathbb{H}_{loc}^2([0,T])$ is the set of locally square integrable progressively measurable processes on the stochastic basis $(\Omega,\mathbb{F},\WW)$.

\begin{lemma}\label{le:strong.sol.le1}
For any $\QQ\in\mathcal{U}$, there exists an $\RR^{k}$-valued process $\nu \in \mathbb{H}_{loc}^2([0,T])$, such that: $\nu\in\mathcal{A}$ and
\begin{align*}
&\EE^{\QQ} \left(M\,\vert\,\mathcal{F}_T\right) = \bar M_{T}(\nu),\quad \QQ-a.s.
\end{align*}
\end{lemma}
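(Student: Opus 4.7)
The proof builds on Lemma \ref{lemma:condexpM}, which already supplies $\nu\in\mathbb{H}^2_{loc}([0,T])$ with $\EE^{\QQ}[M\,|\,\mathcal F_T]=\bar M_T(\nu)$ together with the entropy bound $\EE^{\WW}[\bar M_T(\nu)\log\bar M_T(\nu)]<\infty$. The latter is exactly condition (a) of Definition \ref{def:A}, so only the drift constraint (b), namely $b_t+A_t\nu_t\leq 0$ $\WW$-a.s., remains to be verified. The plan is to extract this inequality from condition~5 of Definition \ref{def:U} after tilting the measure by $M$.

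Set $d\tilde\QQ:=M\,d\QQ$. Conditions~1 and~2 of Definition \ref{def:U} make $\tilde\QQ$ a probability measure, and Lemma \ref{lemma:condexpM} together with condition~3 yields $\tilde\QQ|_{\mathcal F_T}=\bar M_T(\nu)\,d\WW$, which is equivalent to $\WW$ since $\bar M_T(\nu)>0$. Girsanov's theorem then gives $dX_t=\nu_t\,dt+d\tilde B_t$ under $\tilde\QQ$ for some $\tilde\QQ$-Brownian motion $\tilde B$, so
\begin{equation*}
Z_{t\wedge\tau_N}-Z_{s\wedge\tau_N}=\int_{s\wedge\tau_N}^{t\wedge\tau_N}(b_r+A_r\nu_r)\,dr+\int_{s\wedge\tau_N}^{t\wedge\tau_N}A_r\,d\tilde B_r.
\end{equation*}
Property~2 of $(A,b)$ makes $A_{\cdot\wedge\tau_N}$ bounded, so the stochastic integral is a square integrable $\tilde\QQ$-martingale; combined with the $\mathcal F_{s\wedge\tau_N}$-measurability of $\eta$, this gives $\EE^{\tilde\QQ}[\eta^\top\int_{s\wedge\tau_N}^{t\wedge\tau_N}A_r\,d\tilde B_r]=0$. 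Condition~5 of Definition \ref{def:U}, rewritten as $\EE^{\tilde\QQ}[\eta^\top(Z_{t\wedge\tau_N}-Z_{s\wedge\tau_N})]\leq 0$, therefore becomes
\begin{equation*}
\EE^{\tilde\QQ}\!\left[\eta^{\top}\int_{s\wedge\tau_N}^{t\wedge\tau_N}(b_r+A_r\nu_r)\,dr\right]\leq 0
\end{equation*}
for every admissible $\eta$.

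Next I would widen this test-function class. Because $\tilde\QQ\sim\WW$ on $\mathcal F_T$, Lusin's theorem combined with a monotone class argument shows that the $\WW$-a.s.\ continuous, path-dependent functionals admitted in Definition \ref{def:U}.5 are dense in $L^1(\tilde\QQ;\mathcal F_{s\wedge\tau_N})$, so the inequality extends to $\eta=e_i\mathbbm{1}_F$ for arbitrary $F\in\mathcal F_{s\wedge\tau_N}$ and each coordinate $i$. This yields
\begin{equation*}
\EE^{\tilde\QQ}\!\left[\left.\left(\int_{s\wedge\tau_N}^{t\wedge\tau_N}(b_r+A_r\nu_r)\,dr\right)_i\,\right|\mathcal F_{s\wedge\tau_N}\right]\leq 0,\qquad \tilde\QQ\text{-a.s.},
\end{equation*}
for all rational $0\leq s\leq t\leq T$. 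Hence the adapted, finite-variation process $t\mapsto\int_0^{t\wedge\tau_N}(b_r+A_r\nu_r)_i\,dr$ has non-positive conditional increments, and a standard finite-variation supermartingale argument (equivalently, pathwise Lebesgue differentiation) forces $b_r+A_r\nu_r\leq 0$ $(dr\otimes\tilde\QQ)$-a.e.\ on $[0,\tau_N]\times\Omega$. Letting $N\to\infty$ and using $\tilde\QQ\sim\WW$ converts this into the $(dr\otimes\WW)$-a.e.\ inequality on $[0,T]\times\Omega$, yielding condition (b) of Definition \ref{def:A} up to a $(dr\otimes\WW)$-null modification of $\nu$ (which does not alter $\bar M_T(\nu)$).

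The main obstacle I anticipate is the density step: the $\WW$-a.s.\ continuity restriction on $\eta$ in Definition \ref{def:U}.5 must be reconciled with arbitrary $\mathcal F_{s\wedge\tau_N}$-indicators. Once that enlargement is carried out, the rest, namely the Girsanov decomposition, the martingale vanishing, and the Lebesgue-differentiation/finite-variation-supermartingale step, proceeds routinely.
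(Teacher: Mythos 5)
Your proposal is correct and follows essentially the same route as the paper's proof: obtain $\nu$ from Lemma \ref{lemma:condexpM}, tilt by $M$ to get $d\tilde\QQ=\bar M_T(\nu)\,d\WW$ on $\mathcal F_T$, apply Girsanov so that the stochastic-integral part of $Z_{t\wedge\tau_N}-Z_{s\wedge\tau_N}$ vanishes in expectation, enlarge the class of test functions $\eta$ from $\WW$-a.s.\ continuous functionals to $\mathcal F_{s\wedge\tau_N}$-indicators by a measure-theoretic approximation plus monotone class argument, and conclude via the non-increasing finite-variation supermartingale that $b+A\nu\leq 0$. The only differences are cosmetic (you perform the Girsanov step before the test-class extension and invoke Lusin's theorem where the paper approximates indicators of hypercube cylinder sets by continuous functions), and the density step you flag as the main obstacle is exactly the step the paper handles with its explicit monotone class construction.
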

\begin{proof}
From Lemma \ref{lemma:condexpM} we have that there exists $\nu \in \mathbb{H}_{loc}^2([0,T])$ such that $\EE^{\QQ}\left(M|\mathcal{F}_T^X\right)= \bar{M}_T(\nu)$. We need to show that $\nu \in \mathcal{A}$. Using $\QQ\in \mathcal{U}(\xi)$, we have by condition 4 in Definition \ref{def:U} and the tower property of conditional expectations:
\begin{equation}\label{admissible_strong}
    \begin{split}
     0 &\geq \EE^{\QQ}\left[\eta^\top\int_{s\wedge\tau_N }^{t\wedge\tau_N}\left(b_rdr+A_{r}dX_r\right)\,M\right]=\EE^{\WW}\left[  \int_{s\wedge\tau_N}^{t\wedge\tau_N}\left( \eta^\top b_rdr+ \eta^\top A_{r}dX_r\right)\,\bar{M}_{T}(\nu)\right],\\
     \end{split}
\end{equation}
for any $\eta$ bounded, non-negative $\mathcal{F}_{s\wedge\tau_N}$ measurable random variable. 
Let $m\in \mathbb{N}$ and, $0 \leq t_1 < t_2 < \ldots < t_m \leq s$. We introduce $A^i$ to be the i-th row of the $\RR^{l\times k}$-valued process $A$ defined in Definition \ref{def:U}.
 Then, we introduce the family of sets: 
\begin{equation*}
    \begin{split}
    &\Lambda^i_m := \left\{C \in \mathcal{B}(\RR^{k\times m}): \right.  \\    &\left.\quad\EE^{\WW}\left[\left(\mathbbm{1}_{(X_{t_1\wedge\tau_N},\cdots,X_{t_m\wedge\tau_N})\in C }\right) \int_{ s\wedge\tau_N}^{ t\wedge\tau_N}\left(b^i_rdr+A^i_{r}dX_r\right)\bar{M}_T(\nu) \right]\leq 0\right\}.
    \end{split}
\end{equation*}
We observe that $\Lambda^i_m$ is a monotone class of sets for every $i \in \{1,\cdots,l\}$. Indeed, we introduce an increasing sequence of sets ${C_n}\in \Lambda^i_m$. 

Using that $(b_{.\wedge\tau_N})$ and  $(A_{.\wedge\tau_N})$ are bounded and Lemma \ref{lemma:condexpM} we obtain from Dominated Convergence Theorem: 
\begin{equation*}
    \begin{split}
    &\EE^\WW \left[\left(\mathbf{\mathbbm{1}}_{(X_{t_1\wedge\tau_N},\cdots,X_{t_m \wedge\tau_N})\in \bigcup_{n=1}^\infty C_n}\right)\int_{ s\wedge\tau_N}^{ t\wedge\tau_N}\left(b^i_rdr+A^i_{r}dX_r\right)\bar{M}_T(\nu) \right]  \\
    &= \lim_{n \rightarrow \infty}  \EE^\WW \left[\left(\mathbbm{1}_{(X_{t_1\wedge \tau_N },\cdots,X_{t_m \wedge \tau_N})\in C_n}\right)\int_{s\wedge\tau_N}^{ t\wedge\tau_N }\left(b^i_rdr+A^i_{r}dX_r\right)\bar{M}_T(\nu) \right] \leq 0. 
     \end{split}
\end{equation*}
Analogously, if $(C_n)_{n\in \mathbb{N}} \subset \Lambda^i_m$ is a decreasing sequence of sets. We have again from Dominated Convergence theorem: 
\begin{equation*}
      \begin{split}
    &\EE^\WW \left[\mathbbm{1}_{(X_{t_1\wedge \tau_N},\cdots,X_{t_m\wedge\tau_N})\in \bigcap_{n=1}^\infty C_n}\int_{ s\wedge\tau_N }^{ t\wedge\tau_N}\left(b^i_rdr+A^i_rdX_r\right)\bar{M}_T(\nu) \right]  \\
    &= \lim_{n \rightarrow \infty}  \EE^\WW \left[\mathbbm{1}_{(X_{t_1\wedge\tau_N},\cdots,X_{t_m\wedge\tau_N})\in C_n}\int_{ s\wedge\tau_N}^{ t\wedge\tau_N}\left(b^i_rdr+A^i_{r}dX_r\right)\bar{M}_T(\nu) \right] \leq 0. 
    \end{split}
\end{equation*}
Hence, $\Lambda^i_m$ is a monotone class of sets in $\mathcal{B}(\RR^{k\times m})$ for every $i \in \{1,\cdots,l\}$. We introduce the set of hypercubes in $\RR^k$: 
\begin{equation*}
    \mathcal{Q}^k :=\left\{\prod_{i=1}^k (x_i,y_i], \quad -\infty \leq x_i < y_i \leq \infty  \right\},
\end{equation*}
where we define by convention  $(x,\infty] := (x,\infty)$, $x \in \mathbb{R}$.
Additionally, we introduce the following collection of sets: 
\begin{equation*}
    \mathcal{B}_0 := \left\{ \prod_{i=1}^m Q_i,  Q_i \in \mathcal{Q}^k  \right\}.
\end{equation*}
 It is trivial to see that $\mathcal{B}_0$ defines an algebra of sets. We will show that $\mathcal{B}_0 \subset \Lambda^i_m$ for every $i \in \{1,\cdots, l\}$. Firstly, we will prove that every  $Q:=\prod_{j=1}^mQ_j \in \mathcal{B}_0$ belongs to $\Lambda^i_m$. We can define a sequence of continuous, non-negative, compactly supported functions $f_n: \mathbb{R}^{k\times m} \mapsto \RR^{\geq 0}$ that converge $\WW$-a.s. to $\mathbbm{1}_{\prod_{j=1}^mQ_j}$. Then, by Dominated Convergence Theorem: 
\begin{equation*}
    \begin{split}
    &\EE^\WW \left[\left(\mathbbm{1}_{(X_{t_1\wedge\tau_N},\cdots,X_{t_m\wedge\tau_N})\in \prod_{j=1}^mQ^j}\right)\int_{ s\wedge\tau_N}^{ t\wedge\tau_N}\left(b^i_rdr+A^i_{r}dX_r\right)\bar{M}_T(\nu) \right]  \\
    &=\EE^\WW \left[\lim_{n \rightarrow \infty}f_n(X_{t_1\wedge\tau_N},\cdots,X_{t_m\wedge\tau_N})\int_{s\wedge\tau_N}^{ t\wedge\tau_N}\left(b^i_rdr+A^i_{r}dX_r\right)\bar{M}_T(\nu) \right] \\
    &= \lim_{n \rightarrow \infty}\EE^\WW \left[f_n(X_{t_1\wedge\tau_N},\cdots,X_{t_m\wedge\tau_N})\int_{ s\wedge\tau_N}^{ t\wedge\tau_N}\left(b^i_rdr+A^i_{r}dX_r\right)\bar{M}_T(\nu) \right] \leq 0.
    \end{split}
\end{equation*}
where we used Definition \ref{def:U} in the last equation. 
Then, $Q \in \Lambda^i_m$ for every $i \in \{1,\cdots,l\}$, which implies that $\mathcal{B}_0 \subset \Lambda^i_m$ for every $i\in \{1,\cdots,l\}$. Finally, by the monotone class theorem (\cite{ProtterJacod}, Theorem 6.2) $\Lambda_m^{i} = \mathcal{B}(\RR^{k\times m})$. Next we introduce the cylindrical $\sigma$-algebra on $C([0,s])$ defined by $\sigma(\mathcal{C}_s)$, where $\mathcal{C}_s$ denotes the set of cylinder sets in $C([0,s],\RR^k)$. We introduce
\begin{equation*}
\Lambda^i:= \left\{C \in \sigma(\mathcal{C}_s): \quad \EE^{\WW}\left[\mathbbm{1}_{X_{.\wedge\tau_N}\in C } \int_{s\wedge\tau_N}^{ t\wedge\tau_N }\left(b^i_rdr+A^i_{r}dX_r\right)\bar{M}_T(\nu) \right]\leq 0\right\}.
\end{equation*}
By using Dominated Convergence theorem and Lemma \ref{lemma:condexpM} it is simple to check that $\Lambda^i$ is a monotone class. Moreover, using that $\mathcal{B}(\RR^{m\times k}) = \Lambda_m^{i}$ we obtain that $\mathcal{C}_s \subset \Lambda^i$. Hence, by the monotone class theorem (\cite{ProtterJacod}, Theorem 6.2) we obtain that $\Lambda^i = \sigma(\mathcal{C}_s)$. Therefore, $\Lambda^i = \mathcal{F}_s$, for every $i \in \{1,\cdots,l\}$. 

We introduce the measure $\tilde{\QQ}$ defined by  
$$d\tilde{\QQ}:= \bar{M}_T(\nu)d\WW.$$ 

Taking $C_i := \left\{  \EE^{\tilde{\QQ}}\left[\int_{s\wedge\tau_N}^{ t\wedge\tau_N}\left(b^i_r+A^i_{r}\nu_r\right)dr|\mathcal{F}_{s\wedge\tau_N}\right]> 0\right\}$, we obtain from Girsanov theorem (\cite{KaratzasShreve}, Theorem 5.1): 
\begin{equation*}
\begin{split}
&\EE^{\tilde{\QQ}}\left[\mathbbm{1}_{C_i}\left(\int_{s\wedge\tau_N}^{t\wedge\tau_N}(b^i_r+A^i_r\nu_r)dr+\int_{s\wedge\tau_N}^{t\wedge\tau_N}A^i_rd\tilde{B}_r)\right)\right]=\EE^{\tilde{\QQ}}\left[\mathbbm{1}_{C_i}\int_{s\wedge\tau_N}^{t\wedge\tau_N}(b^i_rdr+A^i_rdX_r)\right]  \\
&=\EE^{\WW}\left[\mathbbm{1}_{C_i}\int_{s\wedge\tau_N}^{t\wedge\tau_N}(b^i_rdr+A^i_rdX_r)\bar{M}_T(\nu)\right] \leq 0,
\end{split}
\end{equation*}
where $\tilde{B}$ is a $\RR^k$-valued, $\tilde{\QQ}$-brownian motion. As $A_{.\wedge \tau_N}$ is bounded we observe that the process $\int_{0}^{.\wedge\tau_N}A^i_rd\tilde{B}_r$ is a $\tilde{\QQ}$-martingale. Therefore, 
\begin{equation*}
\EE^{\tilde{\QQ}}\left[\mathbbm{1}_{C_i}\EE^{\tilde{\QQ}}\left(\int_{s\wedge\tau_N}^{t\wedge\tau_N}(b^i_r+A^i_r\nu_r)dr|\mathcal{F}_{s\wedge\tau_N}\right)\right] = \EE^{\tilde{\QQ}}\left[\mathbbm{1}_{C_i}\int_{s\wedge\tau_N}^{t\wedge\tau_N}(b^i_r+A^i_r\nu_r)dr\right] \leq 0,
\end{equation*}
 which implies $\tilde{\QQ}(C_i) = 0$. Hence,
\begin{equation*}
    \EE^{\tilde{\QQ}} \left(\int_{s\wedge\tau_N}^{t\wedge\tau_N}(b^i_r+A^i_{r}\nu_r)dr| \mathcal{F}_{s\wedge\tau_N} \right) \leq 0, \quad \tilde{\QQ}-a.s., \quad i \in \{1,\cdots,l\}. 
\end{equation*}
The latter shows that for every $N \in \NN$, the process $Y^N$ defined by
\begin{equation*}
    Y_t^N := \int_{0}^{ t\wedge\tau_N}(b_r+A_{r}\nu_r)dr,
\end{equation*}
is a $\tilde{\QQ}$-supermartingale with respect to the filtration
$\mathbb{F}^{\tau_N} :=(\mathcal{F}_{t\wedge \tau_N})_{t \in [0,T]}$.
\\
By the Doob-Meyer decomposition (\cite{KaratzasShreve}, Theorem 4.10) the process $Y^N$ is non-increasing. The previous implies that for all $N \in \NN$:
\begin{equation*}
    b_t +A_t\nu_t \leq 0, \quad 0 \leq t \leq  T\wedge\tau_N,\quad  \tilde{\QQ}-a.s.
\end{equation*}
Taking the limit when $N \rightarrow \infty$: 
\begin{equation*}
      b_t +A_t\nu_t \leq 0, \quad 0 \leq t \leq  T,\quad  \tilde{\QQ}-a.s.
\end{equation*}
Using $\tilde{\QQ} \sim \WW$, we obtain:
\begin{equation*}
     b_t +A_t\nu_t \leq 0, \quad 0 \leq t \leq T, \quad  {\WW}-a.s.
\end{equation*}
Then, property d in Definition \ref{def:A} is satisfied. 
Finally, we obtain from Lemma \ref{lemma:condexpM}
\begin{equation*}
\EE^{\WW}\left[\bar{M}_T(\nu)\log\left(\bar{M}_T(\nu)\right) \right] =\frac{1}{2} \EE^{\QQ}\left[M\int_0^T\|\nu_s\|^2ds \right] < \infty. 
\end{equation*}
Hence, $\nu \in \mathcal{A}$. \qed
\end{proof}

\smallskip

Next, we state the following additional assumptions imposing the concavity of $U_{a/p}$ in $M$. 

\begin{ass}\label{ass:2}
Assume that
\begin{itemize}
\item[a)]$U_a(\xi,x,m)$ is strictly concave as a function of $m> 0$ for any given fixed $(\xi,x)$,
\item[b)] or $U_a(\xi,x,m)$ and $U_p(\xi,x,m)$ are concave as functions of $m> 0$ for any given fixed $(\xi,x)$.
\end{itemize}
\end{ass}

\medskip

The above assumptions provide sufficient conditions under which an optimal control of the agent in the relaxed formulation can be constructed in the strong form $\QQ^\nu$, for some $\nu\in\mathcal{A}$ (recall \eqref{eq.hatQ.def}).
    
\begin{lemma}\label{lemma:strong:is:weak}
$\hat{\QQ}^\nu \in \mathcal{U}$ for all $\nu \in \mathcal{A}$.
\end{lemma}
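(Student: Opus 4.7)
The plan is to verify each of the five conditions in Definition \ref{def:U} for $\hat\QQ^\nu(dx,dm)=\WW(dx)\,\delta_{\bar M_T(\nu)}(dm)$. Conditions 1--4 are essentially immediate consequences of Definition \ref{def:A}.a and the product structure of $\hat\QQ^\nu$. Indeed, under $\hat\QQ^\nu$ the coordinate $M$ equals the deterministic functional $\bar M_T(\nu)(X)$ of $X$, so for any measurable $F$ (with $\EE^{\hat\QQ^\nu}|F(X,M)|<\infty$),
\begin{equation*}
\EE^{\hat\QQ^\nu}\left[F(X,M)\right]=\EE^{\WW}\left[F\bigl(X,\bar M_T(\nu)\bigr)\right].
\end{equation*}
Condition 1 then reduces to $\EE^{\WW}\bar M_T(\nu)=1$, guaranteed by Definition \ref{def:A}.a via Vall\'ee--Poussin. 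Condition 2 is the strict positivity of the exponential $\bar M_T(\nu)$. Condition 3 is the very construction of the product measure. Condition 4 is exactly the finite-entropy condition $\EE^{\WW}[\log(\bar M_T(\nu))\bar M_T(\nu)]<\infty$ from Definition \ref{def:A}.a.

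Only condition 5 requires work. Applying the identity above it reduces to proving
\begin{equation*}
\EE^{\bar\QQ^\nu}\left[\eta^\top(Z_{t\wedge\tau_N}-Z_{s\wedge\tau_N})\right]\leq 0.
\end{equation*}
Girsanov (Definition \ref{def:A}.a ensures the change of measure is valid) yields $dX_r=\nu_r\,dr+dB^\nu_r$ under $\bar\QQ^\nu$, with $B^\nu$ a $\bar\QQ^\nu$-Brownian motion, so
\begin{equation*}
Z_{t\wedge\tau_N}-Z_{s\wedge\tau_N}=\int_{s\wedge\tau_N}^{t\wedge\tau_N}(b_r+A_r\nu_r)\,dr+\int_{s\wedge\tau_N}^{t\wedge\tau_N}A_r\,dB^\nu_r.
\end{equation*}

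The finite-variation term is pathwise nonpositive after pairing with $\eta$: by Definition \ref{def:A}.b we have $b_r+A_r\nu_r\leq 0$ for all $r$, $\WW$-a.s.\ (hence $\bar\QQ^\nu$-a.s., since $\bar\QQ^\nu\sim\WW$), and $\eta$ is $\RR^l_+$-valued, so $\eta^\top(b_r+A_r\nu_r)\leq 0$. For the stochastic-integral term, property 2 in Section \ref{strong:opt:contract} gives a uniform bound on $A_{\cdot\wedge\tau_N}$, and $\eta$ is bounded and, since $\eta(x)=\eta(x_r,r\leq s\wedge\tau_N(x))$, it is $\mathcal{F}_{s\wedge\tau_N}$-measurable. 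Hence $\int_{s\wedge\tau_N}^{t\wedge\tau_N}A_r\,dB^\nu_r$ is an $L^2(\bar\QQ^\nu)$-martingale increment relative to $\mathcal{F}_{\cdot\wedge\tau_N}$, and the tower property produces
\begin{equation*}
\EE^{\bar\QQ^\nu}\left[\eta^\top\int_{s\wedge\tau_N}^{t\wedge\tau_N}A_r\,dB^\nu_r\right]=\EE^{\bar\QQ^\nu}\left[\eta^\top\,\EE^{\bar\QQ^\nu}\Bigl(\int_{s\wedge\tau_N}^{t\wedge\tau_N}A_r\,dB^\nu_r\,\Big|\,\mathcal{F}_{s\wedge\tau_N}\Bigr)\right]=0.
\end{equation*}
Combining the two contributions gives the required inequality.

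The only conceptual obstacle, as such, is bookkeeping the measurability and integrability prerequisites needed to split $Z$ into its Girsanov components and to invoke the martingale property with the stopping time $\tau_N$; property 2 of Section \ref{strong:opt:contract} handles boundedness on $[0,\tau_N]$, Definition \ref{def:A}.a handles the change of measure, and Definition \ref{def:A}.b supplies the sign of the drift term, so all ingredients are already in place.
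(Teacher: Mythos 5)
Your proposal is correct and follows essentially the same route as the paper: conditions 1--4 are read off directly from Definition \ref{def:A}.a and the product structure of $\hat\QQ^\nu$, and condition 5 is verified by passing to $\bar\QQ^\nu$ via Girsanov, discarding the martingale part of $Z_{t\wedge\tau_N}-Z_{s\wedge\tau_N}$ (your tower-property step just makes explicit what the paper leaves implicit), and using $b+A\nu\le 0$ together with $\eta\ge 0$ for the drift part.
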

\begin{proof}
    Let $\nu \in \mathcal{A}$. We see that $\hat{\QQ}^\nu$ satisfies property $1$ in Definition \ref{def:U} as a consequence of the properties of $\mathcal{A}$: 
\begin{equation*}
    \EE^{\hat{\QQ}^\nu} M = \EE^{\WW} \left[\bar{M}_T(\nu)\right] = 1. 
\end{equation*}
Furthermore, $\hat{\QQ}^\nu$ satisfies the property 2 in Definition \ref{def:U} as
\begin{equation*}
    \hat{\QQ}^\nu\left(M > 0 \right) = \WW \left(\bar{M}_T(\nu) > 0\right) = 1.
\end{equation*}
Property 3 in Definition \ref{def:U} is trivial to see from \eqref{eq.hatQ.def}.
Finally, we need to show property $4$ in Definition \ref{def:U}. By Girsanov theorem and Lemma \ref{lemma:condexpM}: 
\begin{equation*}
\begin{split}
\EE^{\hat{\QQ}^\nu}\left[M\eta^{\top}\int_{s\wedge\tau_N}^{t\wedge\tau_N} \left( b_r\,dr+ A_{r}dX_r \right) \right] &= \EE^{\WW}\left[\bar{M}_T(\nu)\eta^{\top}\int_{s\wedge\tau_N}^{t\wedge\tau_N} \left( b_r\,dr+ A_{r}dX_r \right) \right]  \\
&= \EE^{\bar{\QQ}^\nu}\left[\eta^{\top} \int_{s\wedge\tau_N}^{t\wedge\tau_N}(b_rdr+A_{r}\nu_r)dr\right],
\end{split}
\end{equation*}
where  $d\bar{\QQ}^\nu = \bar{M}_T(\nu)d\WW$. 

Using that $\eta$, $A_{.\wedge\tau_N}$, and $b_{.\wedge\tau_N}$  are bounded, continuous processes we obtain: 
\begin{equation*} \EE^{\hat{\QQ}^\nu}\left[M\eta^{\top}\int_{s\wedge\tau_N}^{t\wedge\tau_N} \left( b_r\,dr+ A_{r}dX_r \right) \right] = \EE^{\bar{\QQ}^\nu}\left[\eta^{\top} \int_{s\wedge\tau_N}^{t\wedge\tau_N}(b_rdr+A_{r}\nu_rdr)\right] \leq 0,
\end{equation*}

where the last inequality comes from the fact that $\nu \in \mathcal{A}$. Finally, 
using Jensen's inequality:
\begin{equation*}
\EE^{\hat{\QQ}^\nu}\left[\log(M)M\right] = \EE^{\WW}\left[\log(\bar{M}_T(\nu))\bar{M}_T(\nu) \right]   < \infty. 
\end{equation*}
Therefore, $\hat{\QQ}^\nu \in \mathcal{U}$. \qed
\end{proof}

\smallskip

\begin{lemma}\label{prop:relaxed.is.strong}
Under Assumption \ref{ass:2}.a, for any $\xi\in\mathcal{C}^a$ and any $\QQ\in\mathcal{U}^*(\xi)$, there exists $\nu\in\mathcal{A}$ such that $\QQ=\hat{\QQ}^\nu$.
\end{lemma}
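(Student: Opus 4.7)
\medskip
\noindent\emph{Proof plan.} The strategy is to exploit the strict concavity of $U_a$ in $m$: any relaxed optimal response $\QQ$ must be concentrated on the graph of $X \mapsto \bar M_T(\nu)$ for some strong control $\nu \in \mathcal{A}$. Intuitively, if the conditional law of $M$ given $\mathcal{F}_T$ were not a Dirac mass, collapsing $M$ onto its conditional mean $\EE^{\QQ}(M\,|\,\mathcal{F}_T)$ would strictly raise the agent's objective by Jensen's inequality, contradicting optimality of $\QQ$.

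Concretely, I would proceed as follows. Given $\QQ \in \mathcal{U}^*(\xi)$, Lemma \ref{le:strong.sol.le1} yields $\nu \in \mathcal{A}$ such that $\EE^{\QQ}(M\,|\,\mathcal{F}_T) = \bar M_T(\nu)$, $\QQ$-a.s., and Lemma \ref{lemma:strong:is:weak} gives $\hat\QQ^\nu \in \mathcal{U}$. Since the first marginal of $\QQ$ coincides with $\WW$ by condition 3 of Definition \ref{def:U} and $\bar M_T(\nu)$ is a function of $X$ alone,
\begin{equation*}
\EE^{\QQ}\!\left[U_a\!\left(\xi(X),X,\bar M_T(\nu)\right)\right] = \EE^{\WW}\!\left[U_a\!\left(\xi(X),X,\bar M_T(\nu)\right)\right] = J_a(\hat\QQ^\nu,\xi).
\end{equation*}
Applying the conditional Jensen inequality $\omega$-wise to the strictly concave map $m \mapsto U_a(\xi(X(\omega)),X(\omega),m)$ on $(0,\infty)$ gives
\begin{equation*}
\EE^{\QQ}\!\left[U_a(\xi(X),X,M)\,\big|\,\mathcal{F}_T\right] \leq U_a\!\left(\xi(X),X,\bar M_T(\nu)\right),\qquad \QQ\text{-a.s.}
\end{equation*}
Integrating yields $J_a(\QQ,\xi) \leq J_a(\hat\QQ^\nu,\xi)$. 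Since $\hat\QQ^\nu \in \mathcal{U}$ and $\QQ$ attains the finite supremum $V_a(\xi)$, both quantities must in fact equal $V_a(\xi)$, so the conditional inequality is an equality $\QQ$-a.s. Strict concavity then forces $M = \EE^{\QQ}(M\,|\,\mathcal{F}_T) = \bar M_T(\nu)$ $\QQ$-a.s., i.e.\ $\QQ = \hat\QQ^\nu$.

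The main delicate point is the equality case of conditional Jensen for a strictly concave function $f$ on an interval $I \subset (0,\infty)$: if $M$ is $I$-valued with $M$ and $f(M)$ integrable, then $\EE[f(M)\,|\,\mathcal{G}] = f(\EE[M\,|\,\mathcal{G}])$ a.s.\ implies $M$ is $\mathcal{G}$-measurable. This is classical and follows from a conditional subgradient/tangent-line argument. The required integrability is already encoded in $V_a(\xi) \in \RR$ (part of the definition of $\mathcal{U}^*(\xi)$) together with the uniform integrability built into Assumption \ref{ass:1.5}; no additional estimate is needed.
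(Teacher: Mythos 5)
Your proposal is correct and follows essentially the same route as the paper: both obtain $\nu$ from Lemma \ref{le:strong.sol.le1}, use Lemma \ref{lemma:strong:is:weak} to place $\hat\QQ^\nu$ in $\mathcal{U}$, show via Jensen that $\hat\QQ^\nu$ is also an optimal response, and then use strict concavity to force $M=\EE^{\QQ}(M\,|\,\mathcal{F}_T)$ $\QQ$-a.s. The only cosmetic difference is that the paper derives the contradiction through a convex combination $\lambda M+(1-\lambda)\EE^{\QQ}(M\,|\,\mathcal{F}_T)$ rather than invoking the equality case of conditional Jensen directly; the two arguments are interchangeable.
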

\begin{proof}
Consider arbitrary $\xi\in \mathcal{C}^a$ and $\QQ \in \mathcal{U}^*(\xi)$, and define
\begin{equation}\label{def_q_hat}
\hat{\QQ}(dx,dm) := \delta_{\EE^{\QQ}(M|\mathcal{F}_T)}(dm)\WW(dx).
\end{equation}
Firstly, we apply Lemma \ref{le:strong.sol.le1} to deduce the existence of $\nu\in\mathcal{A}$ such that $\EE^{\QQ}(M|\mathcal{F}_T)=\bar M_{T}(\nu)$. The latter implies $\hat\QQ=\hat\QQ^\nu$, and, in turn, Lemma \ref{lemma:strong:is:weak} yields $\hat\QQ^\nu \in \mathcal{U}$.

Next, we recall Assumption \ref{ass:2}.a which states that $U_a(\xi(x),x,.)$ is strictly concave for every $x \in C([0,T],\RR^k)$. Using Jensen's inequality:
\begin{equation*}
    V_a(\xi) = \EE^{\QQ}\left[U_a(\xi(X),X,M)\right] \leq  \EE^{\QQ}\left[U_a\left(\xi,X,\EE^{\QQ}\left(M|\mathcal{F}_T\right)\right)\right] =  \EE^{\hat{\QQ}^{\nu}}\left[U_a(\xi(X),X,M) \right].
\end{equation*}
Hence, $\hat{\QQ}^{\nu} \in \mathcal{U}^*(\xi)$.

Let us assume that $\hat{\mathbb{Q}}^{\nu}\neq \QQ $. Using Assumption \ref{ass:2}.a, for any   $\lambda \in (0,1)$: 
\begin{equation*}
    \begin{split}
       &\lambda U_a(X,\xi(X),M) + (1-\lambda)U_a(X,\xi(X),\EE^{\QQ}\left(M|\mathcal{F}_T\right)) \\
       &< U_a\left(\xi(X),X,\lambda M+ (1-\lambda)\EE^{\QQ}(M|\mathcal{F}_T)\right) \quad \QQ- a.s..
       \end{split}
\end{equation*}
Taking expectations under $\QQ$, using that $\QQ, \hat{\QQ}^{\nu} \in \mathcal{U}^*(\xi)$, and applying Jensen's inequality, we obtain:
\begin{equation*}
    \begin{split}
    V_a(\xi)&< \EE^{\QQ}\left[U_a\left(\xi(X),X,\lambda M + (1-\lambda)\EE^{\QQ}(M|\mathcal{F}_T)\right) \right] \\
    &\leq\EE^{\QQ}\left[U_a\left(\xi(X),X,\lambda \EE^{\QQ}(M|\mathcal{F}_T) + (1-\lambda)\EE^{\QQ}(M|\mathcal{F}_T)\right) \right]  \\
    &=\EE^{\QQ}\left[U_a\left(\xi(X),X,\EE^{\QQ}(M|\mathcal{F}_T)\right) \right] =\EE^{\hat{\QQ}^{\nu}}\left[ U_a\left(\xi(X),X,M\right)  \right] =V_a(\xi).
\end{split}
\end{equation*}
And we obtain a contradiction. Hence, $\QQ =\hat{\QQ}^{\nu} $. \qed
\end{proof}

\medskip

The following proposition summarizes the relationship between solutions of the agent's problem, in the strong and in the relaxed formulations.

\begin{proposition}\label{value_agent_equal}
    Under Assumption  \ref{ass:2}.a or \ref{ass:2}.b we have
    \begin{enumerate}
        \item ${V}_a(\xi) = \bar{V}_a(\xi)$, for any  
 $\xi \in \mathcal{C}$. 
     \item $\mathcal{C}^a = \bar{\mathcal{C}}^a$.
    \item  For any $\xi \in \mathcal{C}$ and $\nu \in \mathcal{A}$, we have: $\nu \in \mathcal{A}^*(\xi) $ if and only if  \smallskip $\hat{\QQ}^\nu \in \mathcal{U}^*(\xi)$.
     \end{enumerate}
\end{proposition}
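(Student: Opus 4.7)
The proof rests on two identities:
(i) $\bar{J}_a(\nu,\xi)=J_a(\hat{\QQ}^\nu,\xi)$ for every $\nu\in\mathcal{A}$ and $\xi\in\mathcal{C}$, which is immediate from the definition \eqref{eq.hatQ.def} of $\hat{\QQ}^\nu$ (under which the $X$-marginal is $\WW$ and $M=\bar{M}_T(\nu)$), together with Lemma \ref{lemma:strong:is:weak} guaranteeing $\hat{\QQ}^\nu\in\mathcal{U}$;
(ii) a reverse inequality $J_a(\QQ,\xi)\leq \bar{J}_a(\nu,\xi)$ obtained by a conditional Jensen argument, where $\nu$ is the process associated to $\QQ$ via Lemma \ref{le:strong.sol.le1}. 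The plan is to establish these two identities first, deduce claim 1, then leverage them to obtain claims 2 and 3.

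For claim 1, the inequality $\bar{V}_a(\xi)\leq V_a(\xi)$ follows immediately from (i) and Lemma \ref{lemma:strong:is:weak}. For the reverse, I fix $\QQ\in\mathcal{U}$ and invoke Lemma \ref{le:strong.sol.le1} to produce $\nu\in\mathcal{A}$ with $\EE^\QQ(M\,|\,\mathcal{F}_T)=\bar{M}_T(\nu)$, $\QQ$-a.s. Since $\xi(X)$ and $X$ are $\mathcal{F}_T$-measurable and $U_a(\xi,x,\cdot)$ is concave under either Assumption \ref{ass:2}.a or \ref{ass:2}.b, the conditional Jensen inequality gives
\begin{equation*}
\EE^\QQ\bigl[U_a(\xi(X),X,M)\,\big|\,\mathcal{F}_T\bigr]\leq U_a\bigl(\xi(X),X,\bar{M}_T(\nu)\bigr),\quad \QQ\text{-a.s.}
\end{equation*}
Taking $\QQ$-expectations and using that the $X$-marginal of $\QQ$ is $\WW$ (condition 3 in Definition \ref{def:U}) yields $J_a(\QQ,\xi)\leq \bar{J}_a(\nu,\xi)\leq \bar{V}_a(\xi)$, so $V_a(\xi)\leq\bar{V}_a(\xi)$. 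The two inequalities give claim 1; note both finite and $\pm\infty$ cases are covered by Assumption \ref{ass:1}.b.

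For claim 3, identity (i) yields $\bar{J}_a(\nu,\xi)=J_a(\hat{\QQ}^\nu,\xi)$, while claim 1 gives $\bar{V}_a(\xi)=V_a(\xi)$. Hence $\bar{J}_a(\nu,\xi)=\bar{V}_a(\xi)\in\RR$ if and only if $J_a(\hat{\QQ}^\nu,\xi)=V_a(\xi)\in\RR$, which is exactly the equivalence $\nu\in\mathcal{A}^*(\xi)\Leftrightarrow \hat{\QQ}^\nu\in\mathcal{U}^*(\xi)$. For claim 2, by claim 1 the threshold condition $V_a(\xi)\geq R_a$ coincides with $\bar{V}_a(\xi)\geq R_a$, so it remains to show $\mathcal{A}^*(\xi)\neq\emptyset\Leftrightarrow \mathcal{U}^*(\xi)\neq\emptyset$. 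The forward direction is immediate from claim 3. For the reverse, given $\QQ\in\mathcal{U}^*(\xi)$, use Lemma \ref{le:strong.sol.le1} to obtain $\nu\in\mathcal{A}$ with $\EE^\QQ(M\,|\,\mathcal{F}_T)=\bar{M}_T(\nu)$; the Jensen step above gives $\bar{J}_a(\nu,\xi)\geq J_a(\QQ,\xi)=V_a(\xi)=\bar{V}_a(\xi)$, hence $\nu\in\mathcal{A}^*(\xi)$.

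The main conceptual point is the conditional Jensen step; it works identically under Assumption \ref{ass:2}.a and \ref{ass:2}.b because only concavity (not strict concavity) of $U_a$ in $m$ is used here — the strict concavity in \ref{ass:2}.a is what Lemma \ref{prop:relaxed.is.strong} needed to uniquely identify an optimal $\QQ$ as being of the form $\hat{\QQ}^\nu$, but for the present proposition the weaker concavity suffices to identify one such $\nu$. The only mild care needed is to ensure the conditional expectations are well defined despite possibly infinite values of $U_a$; this is handled by Assumption \ref{ass:1}.b, which guarantees the $\QQ$-expectations are well defined, so the conditional Jensen inequality applies in the standard extended-real-valued form.
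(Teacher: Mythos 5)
Your proposal is correct and follows essentially the same route as the paper: the identity $\bar J_a(\nu,\xi)=J_a(\hat\QQ^\nu,\xi)$ combined with Lemma \ref{lemma:strong:is:weak} for one inequality, and the (conditional) Jensen argument applied to $\EE^\QQ(M\,|\,\mathcal{F}_T)=\bar M_T(\nu)$ from Lemma \ref{le:strong.sol.le1} for the reverse, with claims 2 and 3 then deduced from claim 1 exactly as in the paper. Your explicit remark that only concavity (not strict concavity) is needed here, and that the finiteness requirement in the definitions of $\mathcal{A}^*(\xi)$ and $\mathcal{U}^*(\xi)$ transfers via $V_a(\xi)=\bar V_a(\xi)$, matches the paper's implicit handling of these points.
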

\begin{proof}
   Let us show the first statement. For any $\xi \in \mathcal{C}$ and $\nu\in\mathcal{A}$, we have:
\begin{equation}\label{aux_stament2}
      \EE^{\WW}\left[U_a(\xi(X),X,\bar{M}_T(\nu))\right] = \EE^{\hat{\QQ}^\nu}\left[U_a(\xi(X),X,M) \right].
   \end{equation}
   By Lemma \ref{lemma:strong:is:weak} we have that $\hat{\QQ}^\nu \in \mathcal{U}$. 
 Taking the supremum in \eqref{aux_stament2} we obtain: 
   \begin{equation*}
   \begin{split}
       \bar{V}_a(\xi) &= \sup_{\nu \in \mathcal{A}} \EE^{\WW}\left[U_a(\xi(X),X,\bar{M}_T(\nu))\right] = \sup_{\nu \in \mathcal{A}} \EE^{\hat{\QQ}^\nu}\left[U_a(\xi(X),X,M) \right] \\
       &\leq \sup_{\QQ \in \mathcal{U}} \EE^{\QQ}\left[U_a(\xi(X),X,M) \right] = V_a(\xi).               
   \end{split}
       \end{equation*}
      
      Using the concavity of $U_a(\xi,x,.)$, we obtain for any $\QQ \in \mathcal{U}$:
     \begin{equation}\label{eq.aux.statement3}
    \EE^{\QQ}\left[U_a(\xi(X),X,M)\right] \leq \EE^{\WW}\left[U_a\left(\xi(X),X,\EE^{\QQ}(M|\mathcal{F}_T)\right)\right].
    \end{equation}
    By Lemma \ref{le:strong.sol.le1} we have that for any $\QQ \in \mathcal{U}$ there exists $\nu \in \mathcal{A}$ such that $\EE^{\QQ}\left(M|\mathcal{F}_T \right)= \bar{M}_T(\nu)$. Hence, taking the supremum in equation \eqref{eq.aux.statement3}, and using Lemma \ref{le:strong.sol.le1}:
\begin{equation*}   
    \begin{split}
        &V_a(\xi) = \sup_{\QQ\in \mathcal{U}}\EE^{\QQ}\left[U_a\left(\xi(X),X,M\right)\right] \leq \sup_{\QQ\in \mathcal{U}}\EE^{\WW}\left[U_a\left(\xi(X),X,\EE^{\QQ}(M|\mathcal{F}^X_T)\right)\right] \\
        &\leq \sup_{\nu\in \mathcal{A}}\EE^{\WW}\left[U_a\left(\xi(X),X,\bar{M}_T(\nu)\right)\right]
         =\bar{V}_a(\xi).
    \end{split}
 \end{equation*}
Therefore, $V_a(\xi) = \bar{V}_a(\xi)$.

\smallskip

Let us show the second statement of the proposition. Consider any $\xi \in \mathcal{C}^a$ and $\QQ^* \in \mathcal{U}^*(\xi)$. By Lemma \ref{le:strong.sol.le1}, there exits $\nu^* \in \mathcal{A}$ such that $\EE^{\QQ^*}\left(M|\mathcal{F}_T\right)=\bar{M}_T(\nu^*)$. Hence, using the concavity of $U_a(\xi,x,.)$ and the first statement of the proposition, we have: 
\begin{equation*}
    \begin{split}
    & \bar{V}_a(\xi) = V_a(\xi) = \EE^{\QQ^*}\left[ U_a(\xi(X),X,M)\right] \leq \EE^{\QQ^*}\left[ U_a(\xi(X),X,\EE^{\QQ^*}(M|\mathcal{F}_T))\right] \\
    &= \EE^{\WW}\left[ U_a(\xi(X),X,\bar{M}_T(\nu^*))\right].
    \end{split}
\end{equation*}
Hence, 
$\nu^* \in \mathcal{A}^*(\xi)$, which implies that $\xi \in \mathcal{C}^a$. 

Next, consider any $\xi \in \bar{C}^a$. Then, there exits $\nu^* \in \mathcal{A}$ such that 
\begin{equation}    
\EE^{\WW}\left[ U_a\left(\xi(X),X,\bar{M}_T(\nu^*)\right) \right] = \bar{V}_a(\xi).
\end{equation}
Using Lemma \ref{lemma:strong:is:weak}, we have that $\hat{\QQ}^{\nu^*}\in \mathcal{U}$. Hence, by the first statement of the proposition:
\begin{equation*}
    \EE^{\hat{\QQ}^{\nu^*}}\left[U_a\left(\xi(X),X,M\right) \right] = \EE^{\WW}\left[U_a\left(\xi(X),X,\bar{M}_T(\nu^*)\right)\right]  =\bar{V}_a(\xi) = V_a(\xi), 
\end{equation*}
which yields  $\hat{\QQ}^{\nu^*} \in \mathcal{U}^*(\xi)$ and, in turn, implies that $\xi \in \mathcal{
C}^a$. Thus, $\mathcal{C}^a= \bar{\mathcal{C}}^a$.

\smallskip

The third statement of the proposition is deduced trivially from the first one. Indeed, for any $\nu \in \mathcal{A}^*(\xi)$, 
\begin{equation}\label{aux_statement_prop}
    V_a(\xi) = \bar{V}_a(\xi) = \EE^{\WW}\left[U_a(\xi(X),X,\bar{M}_T(\nu))\right] = \EE^{\hat{\QQ}^\nu}\left[U_a(\xi(X),X,M)\right].
\end{equation}
Hence, $\hat{\QQ}^\nu \in \mathcal{U}^*(\xi)$. Conversely, if $\hat{\QQ}^\nu \in \mathcal{U}^*(\xi)$, we obtain from \eqref{aux_statement_prop} that $\nu \in \mathcal{A}^*(\xi)$. 
 \qed
 \end{proof}
 
\medskip 
 
The following theorem summarizes the results of this section, showing that the control problems of the principal are the same, in the strong and in the relaxed formulations.
 
\begin{theorem}\label{cor:main}
Under Assumption \ref{ass:2}.a or \ref{ass:2}.b, we have: $\mathcal{C}^a = \bar{\mathcal{C}}^a$ and $\bar{J}_p(\xi)= J_p(\xi)$ for any $\xi\in\mathcal{C}^a$.
\end{theorem}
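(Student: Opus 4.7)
The first claim, $\mathcal{C}^a = \bar{\mathcal{C}}^a$, is just statement (2) of Proposition \ref{value_agent_equal}, so nothing new is needed there. The content of the theorem is therefore the equality $\bar J_p(\xi) = J_p(\xi)$ for arbitrary $\xi\in\mathcal{C}^a$. I plan to prove this by establishing the two inequalities separately.

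For $\bar J_p(\xi) \leq J_p(\xi)$: take any $\nu\in\mathcal{A}^*(\xi)$ and consider $\hat\QQ^\nu$ defined by \eqref{eq.hatQ.def}. Lemma \ref{lemma:strong:is:weak} gives $\hat\QQ^\nu\in\mathcal{U}$, and statement (3) of Proposition \ref{value_agent_equal} yields $\hat\QQ^\nu\in\mathcal{U}^*(\xi)$. Since $\bar M_T(\nu)$ is a function of $X$, we have
\[
\EE^{\WW} U_p(\xi(X),X,\bar M_T(\nu)) = \EE^{\hat\QQ^\nu} U_p(\xi(X),X,M) \leq J_p(\xi),
\]
and taking sup over $\nu\in\mathcal{A}^*(\xi)$ gives the inequality.

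For the reverse inequality $J_p(\xi) \leq \bar J_p(\xi)$: fix $\QQ\in\mathcal{U}^*(\xi)$. Under Assumption \ref{ass:2}.a, Lemma \ref{prop:relaxed.is.strong} already tells us that $\QQ = \hat\QQ^\nu$ for some $\nu\in\mathcal{A}$, and by Proposition \ref{value_agent_equal}.3 this $\nu$ lies in $\mathcal{A}^*(\xi)$; substituting back the relation $\EE^{\QQ}U_p(\xi(X),X,M) = \EE^{\WW} U_p(\xi(X),X,\bar M_T(\nu))$ finishes the argument. Under Assumption \ref{ass:2}.b the measure $\QQ$ need not be of product form, so the strategy is to replace it with $\hat\QQ^\nu$ where $\nu$ is the process supplied by Lemma \ref{le:strong.sol.le1} satisfying $\EE^{\QQ}(M\mid\mathcal{F}_T) = \bar M_T(\nu)$. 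Lemma \ref{le:strong.sol.le1} also gives $\nu\in\mathcal{A}$, and Lemma \ref{lemma:strong:is:weak} gives $\hat\QQ^\nu\in\mathcal{U}$. Concavity of $U_a(\xi,x,\cdot)$ and conditional Jensen yield
\[
V_a(\xi) = \EE^{\QQ} U_a(\xi(X),X,M) \leq \EE^{\QQ} U_a\bigl(\xi(X),X,\EE^{\QQ}(M\mid\mathcal{F}_T)\bigr) = \EE^{\WW} U_a(\xi(X),X,\bar M_T(\nu)),
\]
which, combined with $V_a(\xi)=\bar V_a(\xi)$ from Proposition \ref{value_agent_equal}.1, shows $\nu\in\mathcal{A}^*(\xi)$. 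The same Jensen argument applied with the concave $U_p$ gives
\[
\EE^{\QQ} U_p(\xi(X),X,M) \leq \EE^{\WW} U_p(\xi(X),X,\bar M_T(\nu)) \leq \bar J_p(\xi),
\]
and taking sup over $\QQ\in\mathcal{U}^*(\xi)$ yields $J_p(\xi) \leq \bar J_p(\xi)$.

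The only genuinely delicate step is the case of Assumption \ref{ass:2}.b, where one must simultaneously use concavity of $U_a$ (to guarantee that the $\nu$ produced by Lemma \ref{le:strong.sol.le1} remains agent-optimal after averaging $M$) and concavity of $U_p$ (to show the principal is not hurt by that averaging). Once this joint Jensen argument is correctly organized, everything else is a bookkeeping consequence of Proposition \ref{value_agent_equal} and Lemmas \ref{le:strong.sol.le1}--\ref{prop:relaxed.is.strong}.
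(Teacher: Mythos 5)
Your proposal is correct and follows essentially the same route as the paper: the easy inequality via Lemma \ref{lemma:strong:is:weak} and Proposition \ref{value_agent_equal}.3, and the reverse inequality via Lemma \ref{prop:relaxed.is.strong} under Assumption \ref{ass:2}.a and via Lemma \ref{le:strong.sol.le1} combined with the joint Jensen argument (concavity of $U_a$ to preserve agent-optimality, concavity of $U_p$ to bound the principal's objective) under Assumption \ref{ass:2}.b. The only difference is organizational — you split by inequality first rather than by assumption case — which changes nothing of substance.
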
          
\begin{proof}
The equality $\mathcal{C}^a = \bar{\mathcal{C}}^a$ is proven in Proposition \ref{value_agent_equal} (and is only included in this theorem for completeness). Thus, we only prove the second statement.

Consider any $\xi \in \mathcal{C}^a$ and notice that $\xi \in \bar{\mathcal{C}}^a$. 
If Assumption \ref{ass:2}.a  holds, we have from Lemma \ref{prop:relaxed.is.strong} that for every $\QQ \in \mathcal{U}^*(\xi)\neq \emptyset$, there exits $\nu \in \mathcal{A}$ such that $\hat{\QQ}^\nu = \mathcal{\QQ}$. Additionally, from Proposition \ref{value_agent_equal} we have that $\nu \in \mathcal{A}^*(\xi)$. Proposition \ref{value_agent_equal} also shows that for any $\nu \in \mathcal{A}^*(\xi)$ we have $\hat{\QQ}^\nu\in \mathcal{U}^*(\xi)$. Therefore, 
\begin{equation*}
    \begin{split}
    &J_p(\xi) = \sup_{\QQ \in \mathcal{U}^*(\xi)}\EE^{\QQ}\left[U_p(\xi(X),X,M) \right] = \sup_{\nu \in \mathcal{A}^*(\xi)}\EE^{\hat{\QQ}^\nu}\left[U_p(\xi(X),X,M) \right] \\
    &= \sup_{\nu \in \mathcal{A}^*(\xi)}\EE^{\WW}\left[U_p(\xi(X),X,\bar{M}_T(\nu)) \right] = \bar{J}_p(\xi).
    \end{split}
\end{equation*} 

\smallskip

If Assumption \ref{ass:2}.b holds, we have, for every $\QQ \in \mathcal{U}^*(\xi)$:
\begin{align}
&\EE^{\QQ}\left[U_p(\xi(X),X,M) \right] \leq  \EE^{\QQ}\left[U_p\left(\xi(X),X,\EE^{\QQ}(M|\mathcal{F}_T)\right) \right]\nonumber\\
&=\EE^{\hat{\QQ}^\nu}\left[ U_p(\xi(X),X,M) \right] = \EE^{\WW}\left[U_p(\xi(X),X,\bar{M}_T(\nu)) \right],\label{aux_statement1_theorem2}
\end{align}
\begin{align}
&V_a(\xi) = \EE^{\QQ}\left[U_a(\xi(X),X,M) \right] \leq  \EE^{\QQ}\left[U_a\left(\xi(X),X,\EE^{\QQ}(M|\mathcal{F}_T)\right) \right]\nonumber\\
&= \EE^{\hat\QQ^{\nu}}\left[U_a\left(\xi(X),X,M\right) \right] = \EE^{\WW}\left[U_a(\xi(X),X,\bar{M}_T(\nu))\right],\label{aux_statement1_theorem2.2}
\end{align}
where $\nu \in \mathcal{A}$ satisfying $\EE^{\QQ}\left(M|\mathcal{F}_T \right) = \bar{M}_T(\nu)$ is given by Lemma \ref{le:strong.sol.le1}. 

From equation \eqref{aux_statement1_theorem2.2} we observe that $\hat{\QQ}^\nu \in \mathcal{U}^*(\xi)$, for any $\QQ \in \mathcal{U}^*(\xi)$ and $\nu$ associated with $\QQ$ as in the above. Moreover,
by Proposition \ref{value_agent_equal}, we have that $\nu \in \mathcal{A}^*(\xi)$. Therefore, taking supremum in \eqref{aux_statement1_theorem2}, we obtain: 
\begin{equation*}
\begin{split}
J_p(\xi) &= \sup_{\QQ \in \mathcal{U}^*(\xi)} \EE^\QQ \left[U_p(\xi,X,M) \right] \leq \sup_{\nu \in \mathcal{A}^*(\xi)} \EE^{\hat{\QQ}^\nu}\left[U_p(\xi,X,M) \right] \\
&= \sup_{\nu \in \mathcal{A}^*(\xi)} \EE^{\WW} \left[U_p(\xi,X,\bar{M}_T(\nu))\right] = \bar{J}_p(\xi).
\end{split}
\end{equation*}
On the other hand, using Proposition \ref{value_agent_equal}, we have $\hat{\QQ}^\nu \in \mathcal{U}^*(\xi)$ for any $\nu \in \mathcal{A}^*(\xi)$. Hence, 
\begin{equation*}    
\begin{split}
\bar{J}_p(\xi) &= \sup_{\nu \in \mathcal{A}^*(\xi)}\EE^{\WW} \left[U_p(\xi(X),X,\bar{M}_T(\nu))\right] =\sup_{\nu \in \mathcal{A}^*(\xi)}\EE^{\hat{\QQ}^\nu} \left[ U_p\left(\xi(X),X,M\right)\right] \\
&\leq \sup_{\QQ \in \mathcal{U}^*(\xi)}\EE^{{\QQ}} \left[U_p\left(\xi(X),X,M\right)\right] = J_p(\xi). 
\end{split}
\end{equation*}
Thus, we obtain that $J_p(\xi) = J_p(\xi)$ for all $\xi \in \mathcal{C}^a$. 
\qed
\end{proof} 

\smallskip

The above theorem, in particular, implies that the solution $\xi^*$ of the relaxed problem, constructed in the preceding section, is also a solution to the strong optimal contract problem, under Assumptions \ref{ass:1}, \ref{ass:1.5}, \ref{ass:2}.a,  \ref{ass:2}.b.

\section{Optimal brokerage fee for a client with a private trading signal}
\label{opt:brok:as:info}

Herein, we apply the results obtained in previous sections to the problem of brokerage fees with information asymmetry. 
We introduce $\Omega:= C([0,T],\RR^{3})$ equipped with the Wiener measure $\WW$. We denote by $X := (P,Z,W)^\top$ the canonical random element on $\Omega$, define the measure $\WW^{\sigma,\epsilon}:=\WW \circ (\sigma P,\epsilon Z, W)^{-1}$, and consider the completed (under $\WW^{\sigma,\epsilon}$) filtration $\FF:= \FF^X$.
We denote by $\mathcal{D}$ the set of agent's actions, which consists of all $\FF$-progressively measurable process $\pi$ bounded from below and from above, respectively, by the constants $L$ and $U$. Given an action $\pi \in \mathcal{D}$, we introduce the measure $\QQ^{\pi}\sim\mathbb{W}^{\sigma,\epsilon}$ under which the canonical process $X=(P,Z,W)^\top$ satisfies:
\begin{align*}
&dP_t = W_tdt + \sigma d\tilde{B}^{\pi}_t,\\
&dZ_t = \pi_t dt + \epsilon d\hat{B}^{\pi}_t, 
\end{align*}
where $\tilde B^{\pi},\hat B^{\pi}$ are independent  $\QQ^\pi$-brownian motions.
The coefficients satisfy:  $\sigma,\epsilon>0$.
To ensure that $\QQ^{\pi}$ is well dfined, we need that
\begin{align*}
\hat M(\pi):=\frac{d\QQ^{\pi}}{d\mathbb{W}^{\sigma,\epsilon}}&=\exp\left(-\frac{1}{2}\int_0^T\left(\frac{1}{\sigma^2}W^2_t + \frac{1}{\epsilon^2}\pi^2_t\right)\,dt + \frac{1}{\sigma^2}\int_0^T W_t dP_t + \frac{1}{\epsilon^2}\int_0^T \pi_t\,dZ_t\right)
\end{align*}
satisfies $\EE^{\mathbb{W}}\hat M(\pi)=1$. The above condition is satisfied for all $\pi \in \mathcal{D}$ as a consequence of Corollary 5.16 in \cite{KaratzasShreve}. 

The process $\pi$ represents the trading rate of the agent, playing the role of his control. We assume there is a minimum and maximum attainable rates $L$ and $U$.  
The process $P$ represents the price of the risky asset (the riskless asset has zero return).
The process $Z$ represents the inventory of the agent perturbed by $\epsilon \hat{B}^{\pi}$. The latter perturbation is interpreted as the uncontrolled changes in the agent's inventory, which may be due to rounding errors, automatic portfolio adjustments (e.g., to preserve diversification or leverage ratio), hedging needs, internal order flow (e.g., if the agent is an execution desk in larger firm), etc.

\medskip

Using the notation
\begin{align}
&X=(P,Z,W)^\top,\quad \nu:=(W,\pi,0)^\top,\nonumber \quad B^{\pi} := (\sigma\tilde{B}^\pi,\epsilon\hat{B}^\pi,W)^\top, \\
&b := \left(
\begin{array}{c}
{-W}\\
{W}\\
{0}\\
{0}\\
{-U}\\
{L}
\end{array}
\right),
\quad A := \left(
\begin{array}{ccc}
{1} & {0} & {0}\\
{-1} & {0} & {0}\\
{0} & {0} & {1}\\
{0} & {0} & {-1}\\
{0} & {1} & {0}\\
{0} & {-1} & {0}
\end{array}
\right),
\quad \bar M_T(\nu):=\hat M(\pi),\label{eq.example.A.b.def}
\end{align}
where $L\leq U$ are arbitrary constants, we embed this model into the setting described by Definition \ref{def:A} and equation \eqref{eqn.state}.

\medskip

Next, we consider a set of contracts $\mathcal{C}$ which can be any non-empty set of continuous mappings $\xi$ from $C([0,T],\RR^2)$ to $\RR$, which is compact with respect to the topology of uniform convergence on compacts and is such that
\begin{equation}\label{ui.contracts}
    \lim_{N\rightarrow \infty} \sup_{(\xi,\pi) \in \mathcal{C}\times\mathcal{D}}\EE^{\QQ^{\pi}}\left[|\xi(X)| \mathbbm{1}_{|\xi(X)|\geq N}\right] = 0. 
\end{equation}
We also require that $\mathcal{C}$ contains a large enough constant function (this is clarified at the end of this section).
A sufficient condition for \eqref{ui.contracts} is given, e.g., by the following: for any $r>0$, the family of random variables
\begin{align}
\left\{\sup_{y\in C([0,T],\RR^3):\,\|y\|_{\mathcal{C}([0,T],\RR^3)}\leq 1}|\xi(X+y\,r\,(1+\|X\|))|\right\}_{\xi\in\mathcal{C}}
\label{ui.contracts.2}
\end{align}
is uniformly integrable under $\mathbb{W}^{\sigma,\epsilon}$.

Let us describe examples of sets $\mathcal{C}$ which satisfy the above assumptions. First, for $\gamma, M, K \in (0,\infty)$, we define
\begin{equation*}
    \mathcal{C}_1^{\gamma, M,K} := \left\{\xi : C([0,T],\RR^2) \mapsto [-K,K]: \left|\xi(X)-\xi(Y)\right| \leq M\|X-Y\|^\gamma_{C([0,T],\RR^2)}, \, \forall X,Y\in C([0,T],\RR^2) \right\}.
\end{equation*} 
It is easy to see that $\mathcal{C}_1^{\gamma, M,K}$ satisfies \eqref{ui.contracts.2} and the aforementioned compactness property by the Ascoli's theorem (see \cite[Theorem 18]{Kelley}).
In addition, any closed subset of $\mathcal{C}_1^{\gamma, M,K}$ also satisfies the above assumptions: this, in particular includes the set of all functions in $\mathcal{C}_1^{\gamma, M,K}$ that depend on the values of $(P,Z)$ only at a given finite partition of $[0,T]$, etc.
Another example arises if we consider contracts of linear-polynomial type. Namely, we consider $K,n>0$ and a linear operator $\mathcal{L}$, bounded in $C([0,T],\RR)$, and define 
\begin{equation*}
    \mathcal{C}_2^{K,n}
     := \left\{\xi:\, \xi(P,Z)=\sum_{i,j=1}^n a_{ij} (\mathcal{L}(P))^i (\mathcal{L}(Z))^j,\quad a_{ij} \in [-K,K] \right\}.
\end{equation*}
The operator $\mathcal{L}$ may be a integral operator, or evaluation at a specific point, etc. It is easy to see that $\mathcal{C}_2^{K,n}$ satisfies \eqref{ui.contracts.2} and the aforementioned compactness property.


\begin{remark}
Note that in the above examples we restricted the set of admissible contracts $\xi$ to only those that depend on the first two components of the state process, $(P,Z)$. This represents the asymmetry of information described in the introduction. Indeed, the trading signal $W$ is a private signal of the client, hence the contract is only allowed to depend on $(P,Z)$ which are observed by the broker.
\end{remark}

\medskip

Let us describe the objectives of the principal (broker) and of the agent (client).
The agent's profits are
\begin{align*}
\int_0^T Z_t dP_t = \int_0^T Z_t W_t dt + \sigma \int_0^T Z_t d\tilde B^\pi_t.
\end{align*}
For a given contract $\xi$, the agent maximizes over all admissible controls $\pi \in \mathcal{D}$ his expected profit less the brokerage fee and a quadratic penalty for high trading rate:
\begin{align*}
&\EE^{\QQ^{\pi}}\left[-\xi(X) + \int_0^T Z_t W_t dt + \sigma \int_0^T Z_t d\tilde B^\pi_t - \phi_a\int_0^T \pi^2_tdt \right]\\
&=\EE^{\QQ^{\pi}}\left[-\xi(X) + \int_0^T Z_t W_t dt - \phi_a\int_0^T \pi^2_tdt \right]\\
&=\EE^{\QQ^{\pi}}\left[-\xi(X) + \int_0^T Z_t W_t dt - 2\epsilon^2\phi_a\log \hat M(\pi) + \frac{\epsilon^2\phi_a}{\sigma^2}\int_0^T W^2_t dt \right]\\
&= \EE^{\mathbb{W}^{\sigma,\epsilon}}\left[\hat M(\pi)\left(-\xi(X) + \int_0^T Z_t W_t dt - 2\epsilon^2\phi_a\log \hat M(\pi) + \frac{\epsilon^2\phi_a}{\sigma^2}\int_0^T W^2_t dt\right)\,\right]\\
&= \EE^{\mathbb{W}^{\sigma,\epsilon}}\left[-\hat M(\pi)\,\xi(X) - 2\epsilon^2\phi_a \hat M(\pi)\,\log \hat M(\pi) + \hat M(\pi)\int_0^T\left(\frac{\epsilon^2\phi_a}{\sigma^2} W^2_t+Z_tW_t\right)\,dt\right],
\end{align*}
where $\phi_a>0$ is a given constant.
Thus, the agent's objective is
\begin{align*}
\bar{J}_a(\nu,\xi)&:= \EE^{\mathbb{W}^ {\sigma,\epsilon}}\left[-\bar M_T(\nu)\,\xi(X) -2\epsilon^2\phi_a \bar M_T(\nu)\,\log \bar M_T(\nu) +\bar M_T(\nu) \int_0^T\left(\frac{\epsilon^2\phi_a}{\sigma^2}W_t^2+Z_tW_t \right)\,dt\right] \\
&= \EE^{\mathbb{W}^{\sigma,\epsilon}}\left[U_a\left(\xi(X),X,\bar{M}_T(\nu)\right)\right],\\
U_a\left(\xi,X,M\right) &:=-M\xi -2\epsilon^2\phi_a M\log M + M\int_0^T\left(\frac{\epsilon^2\phi_a}{\sigma^2}W_t^2+Z_tW_t \right)\,dt.
\end{align*}

\smallskip

The broker maximizes over all fees $\xi \in \bar{\mathcal{C}}^a$ her expected fee less the penalty for high inventory: 
\begin{align*}
\bar{J}_p(\xi)&:=\sup_{\nu\in\mathcal{A}^*(\xi)}\EE^{\QQ^{\pi}}\left[\xi(X)-\phi_p \int_0^T (\pi_s)^2ds\right] \\
&= \sup_{\nu\in\mathcal{A}^*(\xi)}\EE^{\WW^{\sigma,\epsilon}}\left[\bar{M}_T(\nu) \xi(X) -2\epsilon^2\phi_p\bar{M}_T(\nu) \log \bar{M}_T(\nu)+\bar{M}_T(\nu)\int_0^T\frac{\epsilon^2\phi_p}{\sigma^2}W_t^2\,dt\right] \\
&=\sup_{\nu\in\mathcal{A}^*(\xi)}\EE^{\WW^{\sigma,\epsilon}}\left[U_p\left(\xi(X),X,\bar{M}_T(\nu) \right) \right],\\
U_p\left(\xi,X,M\right) &:=M \xi-2\epsilon^2\phi_pM \log M+M\int_0^T\frac{\epsilon^2\phi_p}{\sigma^2}W_t^2\,dt.
\end{align*}
We recall that $\mathcal{A}^*(\xi)$ is the set of optimal controls of the agent given the contract $\xi$, and that $R_a$ denotes the reservation value of the agent. 


\medskip

Let us now verify that the assumptions made in previous sections are satisfied for the problem at hand. Noticing that the function $x\mapsto -x\log x$ is strictly concave, we deduce that Assumption \ref{ass:2}.a is satisfied. Then, we conclude (see Theorem \ref{cor:main}) that the above strong formulation of an optimal contract problem has the same solution as its relaxed version. It remains to show that there exists an optimal contract in the relaxed formulation \eqref{eq.agentProblem.def.3}--\eqref{eq.principalProblem.def.relaxed}, with $\mathcal{U}$ given by Definition \ref{def:U} and with $A$, $b$ defined earlier in this section. To this end we notice that the functions $U_a, U_p$, defined above, are continuous in $C([0,T],\RR^3)\times (0,\infty)$. Hence Assumption \ref{ass:1}.a is satisfied. Next, for any $\QQ\in\mathcal{U}$, we apply Lemma \ref{le:strong.sol.le1} to deduce the existence of $\pi\in\mathcal{D}$ such that $\EE^{\QQ} \left(M\,\vert\,\mathcal{F}_T\right) = \hat{M}(\pi)$. Denoting
\begin{align*}
& \zeta:=  \int_0^T\left(\frac{\epsilon^2\phi_a}{\sigma^2}W_t^2+Z_tW_t \right)dt,
\end{align*}
we obtain, for any $\QQ\in\mathcal{U}$ and any $\xi\in\mathcal{C}$:
\begin{align*}
& \EE^{\QQ} \left[U_{a}(\xi(X),X,M)\right]^+ \leq 2\epsilon^2\phi_{a}\sup_{m>0} [-m\log m] + \EE^{\QQ^\pi} (\zeta-\xi)^+ < \infty.
\end{align*}
Similarly, we show that the expectation of the positive part of principal's utility is finite. Thus, we verify the first pat of Assumption \ref{ass:1}.b (the second part of Assumption \ref{ass:1}.b is verified at the end of this section).

Next, we verify Assumption \ref{ass:1.5} with 
\begin{align*}
 \mathcal{U}_0:=\{\delta_{\hat M(\pi(p,z,w))}(dm)\mathbb{W}(dp,dz,dw)\,\}_{\pi\in\mathcal{D}}.
\end{align*}
To this end, we repeat the arguments in the proof of Lemma \ref{prop:relaxed.is.strong} to obtain, for all $\xi \in \mathcal{C}$:
\begin{align*}
&\mathcal{U}^*(\xi) \subset \mathcal{U}_0, \quad
\sup_{\QQ \in \mathcal{U}}\EE^{\QQ}\left[U_a(\xi(X),X,M)\right] =  \sup_{\QQ \in \mathcal{U}_0}\EE^{\QQ}\left[U_a(\xi(X),X,M)\right] =  \sup_{\pi \in \mathcal{D}}\EE^{\QQ^\pi}\left[U_a(\xi(X),X,\hat M(\pi))\right].
\end{align*}
Next, we notice that
\begin{align*}
& U_a(\xi,X,M) =  M\left(-\xi -2\epsilon^2 \phi_a \log M + \zeta\right),
\end{align*}
and, for all large enough $N>0$ and for any $\pi\in\mathcal{D}$, we obtain:
\begin{align*}
& \EE^{\mathbb{Q}^{\pi}}\left[ \left|U_a(\xi(X),X,\hat{M}(\pi))\right|\,\mathbbm{1}_{|U_a(\xi(X),X,\hat{M}(\pi)|\geq N}\right] \\
&=\EE^{\QQ^\pi}\left[ |-\xi(X) -2\epsilon^2\phi_a  \log \hat M(\pi) + \zeta|\,\mathbbm{1}_{|U_a(\xi(X),X,\hat{M}(\pi)|\geq N}\right]\\
&\leq \EE^{\QQ^\pi}\left[\left( |\xi(X)| + 2\epsilon^2\phi_a  |\log \hat M(\pi)| + |\zeta|\right)\,\mathbbm{1}_{|U_a(\xi(X),X, \hat{M}(\pi))|\geq N}\right] \\
&= \EE^{\QQ^\pi}\left[\left( |\xi(X)| + 2\epsilon^2\phi_a   |\log \hat M(\pi)| + |\zeta|\right)\,\mathbbm{1}_{\left(|\xi(X)|+2\epsilon^2\phi_a  |\log(\hat{M}(\pi)|+|\zeta| \right)|\hat{M}(\pi)|\geq N}\right] \\
&\leq \EE^{\QQ^\pi}\left[\left( |\xi(X)| +  2\epsilon^2\phi_a   |\log \hat M(\pi)| + |\zeta|\right)\,\mathbbm{1}_{|\xi(X)|^2\geq N/9}\right] \\
&+\EE^{\QQ^\pi}\left[\left( |\xi(X)| 
+  2\epsilon^2\phi_a   |\log \hat M(\pi)| + |\zeta|\right)\,\mathbbm{1}_{4\epsilon^4\phi_a^2  |\log(\hat{M}(\pi))|^2\geq N/9}\right]  \\
&+\EE^{\QQ^\pi}\left[\left( |\xi(X)| 
+ 2\epsilon^2\phi_a   |\log \hat M(\pi)| + |\zeta|\right)\,\mathbbm{1}_{|\zeta|^2\geq N/9}\right] \\
&+\EE^{\QQ^\pi}\left[\left( |\xi(X)| 
+ 2\epsilon^2\phi_a    |\log \hat M(\pi)| + |\zeta|\right)\,\mathbbm{1}_{\hat M (\pi)^2\geq N}\right]
\end{align*}
Recall that $\mathcal{C}$ is a family of uniformly integrable random variables under $\QQ^\pi$, uniformly over all $\pi\in\mathcal{D}$, which yields
\begin{align}\label{eq.example.xi.unif.int}
& \limsup_{N\rightarrow\infty}\sup_{\pi\in\mathcal{D},\xi\in\mathcal{C}}\QQ^{\pi}\left(|\xi(X)|\geq \frac{\sqrt{N}}{3}\right)=0.
\end{align}
It is also easy to see (due to the uniform boundedness of $\pi$) that
\begin{align}\label{eq.example.zeta.unif.int}
& \limsup_{N\rightarrow\infty}\sup_{\pi\in\mathcal{D}}\QQ^{\pi}\left(|\zeta|\geq \frac{\sqrt{N}}{3}\right)=0.
\end{align}
In addition, by Markov's inequality: 
\begin{equation}\label{eq.bound.m}
    \begin{split}
\lim_{N\rightarrow\infty}\sup_{\pi \in \mathcal{D}}\QQ^{\pi}\left(\hat{M}(\pi)\geq \sqrt{N} \right) &= \lim_{N\rightarrow\infty}\sup_{\pi \in \mathcal{D}}\QQ^{\pi}\left(\log\left(\hat{M}(\pi)\right)\geq \log(\sqrt{N}) \right) \\
&\leq \lim_{N\rightarrow\infty} \frac{2}{\log\left(N \right)}\sup_{\pi \in \mathcal{D}}\EE^{\QQ^\pi}\left[\left|\log \left( \hat{M}(\pi)\right)\right| \right].
    \end{split}
\end{equation}
Now, we show that the second moments of $\log(\hat{M}(\pi))$ under $\QQ^\pi$ 
are bounded uniformly over $\pi \in \mathcal{D}$. Indeed, 
\begin{equation}\label{bound:log:m}
    \begin{split}
     &\sup_{\pi\in \mathcal{D}} \EE^{\QQ^\pi} \left[ \left(\log\hat{M}(\pi)\right)^2\right] \\
     &=  \sup_{\pi\in \mathcal{D}} \EE^{\QQ^\pi}\left[\left(\int_0^T\frac{1}{2}\left(\frac{1}{\epsilon^2}\pi_s^2+\frac{1}{\sigma^2}W_s^2  \right)\,ds +\frac{1}{\epsilon}\int_0^T\pi_sd\hat{B}^\pi_s+\frac{1}{\sigma}\int_0^TW_sd\tilde{B}^\pi_s  \right)^2\right] \\
     &\leq \sup_{\pi \in \mathcal{D}}\EE^{\QQ^\pi}\left[3\left(\int_0^T\frac{1}{2}\left(\frac{1}{\epsilon^2}\pi_s^2+\frac{1}{\sigma^2}W_s^2 \right)ds\right)^2 +3\left(\int_0^T\frac{1}{\epsilon}\pi_sd\hat{B}^\pi_s\right)^2+3\left(\frac{1}{\sigma}\int_0^TW_sd\tilde{B}^\pi_s  \right)^2\right] \\
     &\leq \frac{3}{2}\EE^{\WW^{\sigma,\epsilon}}\left[\int_0^T\left(\frac{1}{\epsilon^2}(|L|+|U|)^2+\frac{1}{\sigma^2}W_s^2ds \right)^2 \right] + \frac{3 (|L|+|U|)^2T}{\epsilon^2} + \frac{3T}{\sigma^2} <\infty. 
     \end{split}
\end{equation}
Combining the above expression with \eqref{eq.bound.m}, we obtain 
\begin{equation}\label{eq.bound.m.2}
\lim_{N\rightarrow\infty}\sup_{\pi \in \mathcal{D}}\QQ^{\pi}\left(\hat{M}(\pi)\geq \sqrt{N} \right) = 0. 
\end{equation}
Similarly, we show that the second moments of $\zeta$ under $\QQ^\pi$ are bounded uniformly over $\pi\in\mathcal{D}$, which yields its uniform integrability over all $\QQ^\pi$, for $\pi\in\mathcal{D}$. Thus, collecting \eqref{eq.example.xi.unif.int}, \eqref{eq.example.zeta.unif.int}, \eqref{eq.bound.m.2}, we deduce the uniform integrability of $\xi$, $\log \hat M(\pi)$, $\zeta$ and conclude that
\begin{align*}
&\limsup_{N\rightarrow\infty}\sup_{\pi\in\mathcal{D},\xi\in\mathcal{C}}\EE^{\QQ^\pi}\left[ |U_a(\xi(X),X,\hat M(\pi))|\,\mathbbm{1}_{\left|U_a(\xi(X),X,\hat M(\pi))\right|\geq N}\right] = 0,
\end{align*}
which verifies Assumption \ref{ass:1.5}.$a$. Assumption \ref{ass:1.5}.$b$ is verified similarly.

Next, using \eqref{bound:log:m}, we obtain
\begin{equation*}
    \begin{split}
    \sup_{\QQ \in \mathcal{U}_0} \EE^{\QQ} \left[M \log M\right] = \sup_{\pi\in \mathcal{D}} \EE^{\QQ^\pi} \log\hat{M}(\pi) \leq \sup_{\pi\in \mathcal{D}} \EE^{\QQ^\pi} \left|\log\hat{M}(\pi)\right| < \infty. 
    \end{split}
\end{equation*}
which verifies Assumption \ref{ass:1.5}.d.

Let us now show that $\mathcal{U}_0$ satisfies Assumption \ref{ass:1.5}.c. By Markov's inequality:
\begin{equation}\label{limit_Q:0}
    \begin{split}
    \lim_{k\rightarrow \infty}\sup_{\QQ \in \mathcal{U}_0} \QQ\left(M \leq \frac{1}{k} \right) 
    &= \lim_{k\rightarrow \infty}\sup_{\pi \in \mathcal{D}} \WW^{\sigma,\epsilon} \left(\hat{M}(\pi)^{-1}\geq k \right) \\
    &\leq \lim_{k\rightarrow \infty}\sup_{\pi \in \mathcal{D}} \WW^{\sigma,\epsilon} \left(|\log(\hat{M}(\pi)^{-1})|\geq \log(k) \right) \\
    &\leq \lim_{k\rightarrow \infty} \frac{1}{\log(k)}\sup_{\pi \in \mathcal{D}}\EE^{\WW^{\sigma,\epsilon}}\left|\log\left(\hat{M}(\pi)^{-1}\right)\right|.
    \end{split}
\end{equation}
Using the above and \eqref{bound:log:m}, we obtain:
\begin{equation*}
     \lim_{k\rightarrow \infty}\sup_{\QQ \in \mathcal{U}_0} \QQ\left(M \leq \frac{1}{k} \right) = 0, 
\end{equation*}
which verifies Assumption \ref{ass:1.5}.c.

It remains to verify the second part of Assumption \ref{ass:1}.b: i.e., to show that there exists $\xi\in\mathcal{C}$ and $\QQ\in\mathcal{U}^*(\xi)$ such that $J_a(\QQ,\xi)\geq R_a$.
To this end, we consider the following optimization problem:
\begin{align}
  &\tilde{V} := \sup_{\QQ \in \mathcal{U}} \EE^{\QQ}\left[-2\epsilon^2\phi_a  M\,\log M +M \zeta\right]
    = \sup_{\QQ \in \mathcal{U}_0} \EE^{\QQ}\left[-2\epsilon^2\phi_a  M\,\log M +M \zeta\right],
    \label{eq.example.tildeV}
\end{align}
where the last equality follows from the concavity of the integrand in $M$.
Using the uniform boundedness of $\pi\in\mathcal{D}$, we obtain
\begin{align*}
& \lim_{N\rightarrow\infty} \sup_{\QQ \in \mathcal{U}_0} \EE^{\QQ}\left[M(\zeta-N)^+\right]
= \lim_{N\rightarrow\infty} \sup_{\pi \in \mathcal{D}} \EE^{\QQ^\pi}(\zeta-N)^+ = 0.
\end{align*}
Thus, for any $\varepsilon>0$, there exists $N_0$ s.t., for all $N\geq N_0$,
\begin{align*}
&\sup_{\QQ \in \mathcal{U}_0} \EE^{\QQ}\left[-2\epsilon^2\phi_a  M\,\log M +M \zeta\right]\geq\sup_{\QQ \in \mathcal{U}_0} \EE^{\QQ}\left[-2\epsilon^2\phi_a  M\,\log M + M (\zeta-N)^-\right]
\geq \tilde V - \varepsilon.
\end{align*}
Consider a maximizing sequence $\{\QQ_n\}$ of the left hand side of the above. Using the fact that $\mathcal{U}_0$ satisfies Assumption \ref{ass:1.5} (as established above), we repeat the arguments in the proof of Theorem \ref{existence:weak:opt}, to deduce that $\{\QQ_n\}$ has a limit point $\QQ^*\in\mathcal{U}$. Then, using Skorokhod's representation and Fatou's lemma (see the proof of Theorem \ref{existence:weak:opt}), we obtain
\begin{align*}
&\EE^{\QQ^*}\left[-2\epsilon^2\phi_a  M\,\log M +M \zeta\right] \geq \tilde V - \varepsilon.
\end{align*}
As the above holds for any $\varepsilon>0$, we conclude that $\QQ^*$ is a solution to \eqref{eq.example.tildeV}.
Then, choosing a constant contract $\tilde{\xi}\in\RR$ satisfying $\tilde{\xi}\geq\tilde{V}-R_a$, where $R_a$ is the reservation value of the agent, and assuming that $\tilde\xi\in\mathcal{C}$ (i.e., assuming that $\mathcal{C}$ contains a large enough constant), we observe that $\QQ^*\in\mathcal{U}^*(\tilde\xi)$ and, in turn, that $J_a(\QQ^*,\tilde\xi)\geq R_a$. This verifies the second part of Assumption \ref{ass:1}.b.

\smallskip

Therefore, by Theorem \ref{existence:weak:opt} there exists an optimal contract in the relaxed formulation and it is a limit point of any maximizing sequence of contracts. The latter is an optimal contract in the strong formulation by Theorem \ref{cor:main}.

\section{Appendix}

\emph{Proof of Lemma \ref{lemma:condexpM}.}
We introduce the $(\Omega\times\RR,\mathbb{F},\QQ)$-martingale 
\begin{equation*}
    \phi_t = \EE^\QQ(M|\mathcal{F}_t),\quad t\in[0,T].
\end{equation*}
As $\phi$ is adapted to the filtration of $\mathbb{F}$, w.l.o.g. we can restrict $\phi$ and all processes appearing further in this proof to $(\Omega,\mathbb{F},\WW)$. Then, since $\phi$ is adapted to the completed filtration of a Brownian motion, we know that it has a continuous modification (\cite{LeGall}, Theorem 5.18, consequence 2).
By the martingale representation theorem (\cite{KaratzasShreve}, Theorem 4.2), there exists a progressively measurable process $Z \in \mathbb{H}^2_{loc}([0,T])$ such that
\begin{equation}\label{mtg.representation}
    \phi_t = 1+ \int_0^tZ_sdX_s.
\end{equation} 
We introduce the process 
$$\nu_{t} :=\frac{Z_{t}}{\phi_{t}}, \quad t \in [0,T].$$
We will show that $\nu \in \mathbb{H}^2_{loc}([0,T])$. 
Firstly, we see that $\nu$ is progressive measurable as $\phi$ and $Z$ are progressive measurable. 

Let $\gamma_n$ be the sequence of stopping times defined by:  
\begin{equation*}
    \gamma_n =\inf\{t \geq 0 : \phi_{t} \leq \frac{1}{n}\}\wedge \rho_n,
\end{equation*}
where $\rho_n$ is the localizing sequence of stopping times of $Z$. 

As $\phi$ is continuous we have that 
\begin{equation*}
    \gamma_n \leq \gamma_{n+1}, \quad \WW -a.s.,
\end{equation*}
and
\begin{equation*}
\lim_{n\rightarrow \infty}\gamma_n = \infty.
\end{equation*}
In addition,
\begin{equation*}
    \EE^{ \WW}\left[ \int_0^{T\wedge\gamma_n}\|\nu_{t}\|^2dt\right] \leq n^2 \EE^{\WW}\left[ \int_0^{T\wedge \gamma_n} \|Z_t\|^2dt\right] < \infty.
\end{equation*}
Hence, $\nu \in \mathbb{H}^2_{loc}([0,T])$. Rewriting \eqref{mtg.representation}, we obtain 
\begin{equation*}
    d\phi_t = Z_t dX_t = \phi_t\nu_t dX_t, \quad \phi_0 =1, \quad t \in [0,T].
\end{equation*}
Applying Ito Lemma to $\log(\phi_{T})$, we obtain:
\begin{equation*}
    \begin{split}
    \log(\phi_{T}) = \int_0^{T}\frac{1}{\phi_s}d\phi_s-\frac{1}{2}\int_0^{t}(\phi_s)^{-2}d[\phi]_s =\int_0^{T}\nu_sdX_s-\frac{1}{2}\int_0^{T}\|\nu_t\|^2dt.
    \end{split}
\end{equation*}
Hence,  
\begin{equation}\label{phi.equal.m}
\phi_T = \EE^{\QQ}\left(M|\mathcal{F}^X_{T}\right) = \bar{M}_{T}(\nu), \WW-a.s..
\end{equation}
Taking expectations under $\WW$ in \eqref{phi.equal.m} we obtain that $\EE^{\WW}\bar{M}_{T}(\nu) = 1$. Then, $\bar{M}_{.}(\nu)$ is a martingale. 

Additionally, by the tower property we obtain for all $n\in \NN$:
\begin{equation}\label{ineq.h2.local}
    \begin{split}
\EE^{\QQ}\left[M \int_0^{T\wedge\gamma_n} \|\nu_s\|^2ds\right] &= \EE^{\WW}\left[\bar{M}_{T\wedge\gamma_n}(\nu)\int_0^{T\wedge\gamma_n}\|\nu_s\|^2ds\right] \\
&= 2\EE^{\WW}\left[\bar{M}_{T\wedge\gamma_n}(\nu)\log\left(\bar{M}_{T\wedge\gamma_n}(\nu) \right)\right],
    \end{split}
\end{equation}
Using the submartingale property of $\bar{M}_.\log(\bar{M}_.)$ and Jensen's inequality:
\begin{equation}\label{ineq.0}
    \EE^{\WW}\left[\bar{M}_{T\wedge\gamma_n}(\nu)\log\left(\bar{M}_{T\wedge\gamma_n}(\nu) \right)\right]  \leq \EE^{\WW}\left[\bar{M}_{T}(\nu)\log\left(\bar{M}_{T}(\nu) \right)\right] \leq \EE^{\QQ}\left[M \log(M)\right] <  \infty,
\end{equation}
where we used that $\QQ$ satisfies property 3 in Definition \ref{def:U}. \\
By the monotone convergence theorem applied on \eqref{ineq.0}, we obtain:
\begin{equation}\label{ineq.1}
\begin{split}
\EE^{\QQ}\left[M\int_0^{T}\|\nu_s\|^2ds\right] &= 
\lim_{N\rightarrow\infty } \EE^{\QQ}\left[M\int_0^{T\wedge\gamma_n}\|\nu_s\|^2ds\right] \\
&= \lim_{n \rightarrow \infty} 2\EE^{\WW}\left[\bar{M}_{T\wedge\gamma_n}(\nu)\log(\bar{M}_{T\wedge\gamma_n}(\nu)) \right] \\
&\leq 2\EE^{\WW}\left[\bar{M}_{T}(\nu)\log(\bar{M}_{T}(\nu)) \right].  
\end{split} 
\end{equation}
Moreover, using Fatou's lemma: 
\begin{equation}\label{ineq.2}
\begin{split}
\EE^{\QQ} \left[M\int_0^T\|\nu_s\|^2ds \right] = \lim_{n \rightarrow \infty} 2\EE^{\WW}\left[\bar{M}_{T\wedge\gamma_n}(\nu)\log(\bar{M}_{T\wedge\gamma_n}(\nu)) \right] \geq 2\EE^{\WW}\left[\bar{M}_{T}(\nu)\log(\bar{M}_{T}(\nu)) \right].
\end{split}
\end{equation}
Combining equations \eqref{ineq.1} and \eqref{ineq.2} we obtain: 
\begin{equation*}
    \EE^{\QQ} \left[M \int_0^T \|\nu_s\|^2ds\right]= 2\EE^{\WW}\left[\bar{M}_T(\nu) \log ( \bar{M}_T(\nu))\right].
\end{equation*}
\qed

\bibliographystyle{ieeetr} 
\bibliography{strongweak}
\end{document}